\documentclass[a4paper,runningheads]{llncs}
\usepackage[T1]{fontenc}
\usepackage{graphicx}

\bibliographystyle{plainurl}

\usepackage{amssymb,amsmath}%
\usepackage{mathtools}
\usepackage{thmtools}
\usepackage[labelfont={bf,sf}]{subcaption}
\usepackage[misc,geometry]{ifsym}

\let\oldendproof\endproof
\renewcommand\endproof{~\hfill$\qed$\oldendproof}

\usepackage{apptools}
\newcommand{\restateref}[1]{\IfAppendix{\hyperref[#1]{$\star$}}{\hyperref[#1*]{$\star$}}}

\usepackage{todonotes}

\usepackage{setspace}
\usepackage{hyperref}
\usepackage{cleveref}
\usepackage[bibliography=common]{apxproof}

\crefname{section}{Sec.}{Secs.}
\Crefname{section}{Section}{Sections}
\crefname{figure}{Fig.}{Figs.}
\crefname{table}{Tab.}{Tabs.}
\crefname{Definition}{Def.}{Defs.}
\crefname{theorem}{Theorem}{Theorems}
\crefname{corollary}{Corollary}{Corollaries}
\crefname{claim}{Claim}{Claims}
\crefname{myclaim}{Claim}{Claims}
\crefname{lemma}{Lemma}{Lemmas}
\crefname{proposition}{Claim}{Claims}
\crefname{equation}{Eq.}{Eqs.}
\crefname{appendix}{App.}{Apps.}

\usepackage{color}

\newtheorem{myclaim}[proposition]{Claim}

\newcommand{\oh}{\ensuremath{\mathcal{O}}}

\DeclareMathOperator{\cro}{\chi}
\DeclareMathOperator{\natpos}{p_\mathsf{nat}}
\DeclareMathOperator{\optpos}{P_\mathsf{opt}}

\graphicspath{{figures/}}

\begin{document}

\title{Simultaneous Drawing of Layered Trees}
\author{Julia~Katheder \inst{1}\orcidID{0000-0002-7545-0730} \and
Stephen~G.~Kobourov\inst{2}\orcidID{0000-0002-0477-2724} \and \\
Axel~Kuckuk\inst{1}\orcidID{0000-0002-5070-3412} \and \\
Maximilian~Pfister\inst{1}\orcidID{0000-0002-7203-0669} \and \\
Johannes~Zink\inst{3}\orcidID{0000-0002-7398-718X}}
\authorrunning{J.~Katheder, S.G.~Kobourov, A.~Kuckuk, M.~Pfister, and J.~Zink}
\institute{Wilhelm-Schickard-Institut f{\"u}r Informatik, Universit{\"a}t T{\"u}bingen, T{\"u}bingen, Germany \\
\email{firstname.lastname@uni-tuebingen.de} \and
Department of Computer Science, University of Arizona, Tucson, USA
\email{lastname@cs.arizona.edu} \and
Institut f{\"u}r Informatik, Universit{\"a}t W{\"u}rzburg, W{\"u}rzburg, Germany
\email{lastname@informatik.uni-wuerzburg.de}}
\maketitle
\begin{abstract}
	We study the crossing-minimization problem in a layered graph drawing
	of planar-embedded rooted trees whose leaves have a given total order on the first layer,
	which adheres to the embedding of each individual tree.
	The task is then to permute the vertices
	on the other layers
	(respecting the given tree embeddings)
	in order to minimize the number of crossings.
	While this problem is known to be NP-hard for multiple trees even on just two layers,
	we describe a dynamic program running in polynomial time
	for the restricted case of two trees.
	If there are more than two trees, we restrict the number of layers to three,
	which allows for a reduction to a shortest-path problem.
	This way, we achieve XP-time in the number of trees.
	
	\keywords{layered drawing \and tree drawing \and crossing-minimization \and dynamic program \and XP-algorithm}	
\end{abstract}

\section{Introduction}
Visualizing hierarchical structures as directed trees is essential for many applications, from software engineering~\cite{chawathe1996change} to medical ontologies~\cite{bodenreider2004unified} and phylogenetics in biology~\cite{treeVizReferenceGuide}.
Phylogenetic trees in particular can serve as an example to illustrate the challenges of working with hierarchical structures, as they are inferred from large amounts of data using various computational methods \cite{yang2012molecular} and need to be analyzed and checked for plausibility using domain knowledge~\cite{liu2019aggregated}.
From a human perspective, visual representations are needed for this purpose. Most available techniques focus on the visualization of a single tree~\cite{graham2010survey}.
However, certain tasks may require working with multiple, possible interrelated trees, such as the comparison of trees~\cite{liu2019aggregated,munzner2003treejuxtaposer} or analyzing ambiguous lineages~\cite{puigbo2009search}.
Graham and Kennedy~\cite{graham2010survey} provide a survey for drawing multiple trees in this context.

While there are many different visualization styles for trees (see an overview by Schulz~\cite{schulz2011treevis}), directed node-link diagrams are the standard.
The most common approach to visualize a directed graph as a node-link diagram is the layered drawing approach due to Sugiyama et al.~\cite{4308636}.
After assigning vertices to layers, the next step is to permute the vertices on each layer
such that the number of crossings is minimized, as crossings negatively affect the readability of a graph drawing~\cite{Purchase2002,Ware2002}.
However, this problem turns out to be hard even when restricting the number of layers or the type of graphs.
For example, if the number of layers is restricted to two, crossing minimization remains NP-hard for general graphs~\cite{NPHardGeneral}, even if the permutation on one layer is fixed~\cite{FixedLayerNPHardGeneral}, known as the \emph{one sided crossing minimization} (OSCM) problem.
However, it is known that OSCM is fixed-parameter tractable in the number of crossings, which has first been shown by Dujmovic and Whitesides~\cite{DBLP:journals/algorithmica/DujmovicW04}.
For the special case of a single tree on two layers, OSCM can be solved in polynomial time~\cite{kLevel} and in the case that both layers are variable, the problem can be reduced to the minimum linear arrangement problem~\cite{crossEqualLinArrangement},
which is polynomial-time solvable~\cite{CHUNG198443}.
For an arbitrary number of layers, the problem is still NP-hard even for trees~\cite{kLevel},
however, the obtained trees in the reduction~\cite{kLevel} are not drawn upward in the direction from the leaves to a root vertex (and we do not see an obvious way to adjust their construction).
With respect to forests, the general case where $k \in \oh(n)$ is known to be NP-hard~\cite{DBLP:conf/gd/MunozUV01} even for $\ell = 2$ layers
and trees of maximum degree $4$.

\paragraph{Our Contributions.}
We consider the crossing-minimization problem
for an $n$-vertex forest of $k$ trees whose vertices are assigned to $\ell$ layers
such that all leaves are on the first layer in a fixed total order
and the vertices on each of the other layers need to be permuted.
In other words, the task is to draw $k$ layered rooted trees
whose leaves may interleave simultaneously,
while minimizing the number of crossings.

We show that the case of $k = 2$ trees is polynomial-time solvable
for arbitrary $\ell$ using a dynamic program (see \cref{sec:two-trees}).
Furthermore, we describe an XP-algorithm\footnote{%
	XP is a parameterized running-time class and
	an XP-algorithm has a running time in $\oh(|I|^{f(k)})$,
	where $|I|$ is the size of the instance,
	$f$ a computable function, and $k$ the parameter.
	Note that every FPT-algorithm is an XP-algorithm
	but not vice versa.}
in the number~$k$ of trees
modeling the solution space by a $k$-dimensional grid graph
for $\ell = 3$ layers.
Our result generalizes to planar graphs under certain conditions (see \cref{sec:three-layers}).
We conclude with the
open case of $k \geq 3$ and $\ell \geq 4$ (see \cref{sec:open}).

\section{Preliminaries}
\label{sec:prel}

\begin{figure}[t]
	\begin{subfigure}[b]{.645\textwidth}
			\centering
			\includegraphics[scale=1,page=1]{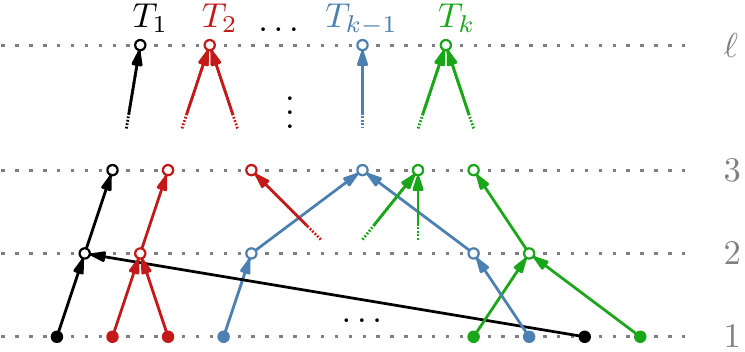}
			\subcaption{}
			\label{fig:sim-embedding-trees-a}   
	\end{subfigure}
	\begin{subfigure}[b]{.345\textwidth}
			\centering
			\includegraphics[scale=1,page=2]{sim-embedding-trees}
			\subcaption{}
			\label{fig:sim-embedding-trees-b}   
	\end{subfigure}
		\caption{\textsf{\bfseries (a)} Upward drawing of $k$ disjoint directed rooted trees $T_1,\dots T_k$ on $\ell$ layers.
			As indicated by the filled vertices, the total order $<_1$ of layer~$1$ is given,
			while the total orders $<_2,\dots, <_\ell$ need to be determined.
			In the following figures, we drop the arrowheads and assume an upward direction.
			\textsf{\bfseries (b)} Illustration of positions (gray boxes) with respect to
			$T_1$ and their respective \emph{ideal positions} indicated by a directed gray arrow from each position $p$ to its ideal position~$p^\star$.}
		\label{fig:sim-embedding-trees}
	\end{figure}

Let $\mathcal{F}$ be a given
forest of $k$ disjoint rooted trees $T_1,\dots, T_k$
directed towards the roots such that all vertices
except for the roots have out-degree~1.
For an integer $\ell \ge 2$,
let an assignment of the vertices to $\ell$ layers be given, 
such that each tree $T_i$ is \emph{drawn upward}, i.e.,
for any directed edge $(u,v) \in T_i$, we have that the layer of~$u$, denoted by~$L(u)$, is strictly less than $L(v)$.
This implies that if $L(u) = 1$, $u$ is a leaf of $T_i$.
The other way around, we also require that
for any leaf~$v$, $L(v) = 1$. Note that the roots of the trees can be placed on different layers, while layer~$\ell$ hosts the root of every tree with height exactly $\ell$.
We refer to the set of vertices of~$T_i$ on layer~$j$ as $V_j(T_i)$ and we define the set of all vertices on layer~$j$ as $V_j(\mathcal{F}) = V_j(T_1) \cup \dots \cup V_j(T_k)$.

We further require that the total order $<_1$ of layer~$1$
(i.e., the order of all leaves)
is given as part of the input, with the additional restriction that $<_1$ induces a planar embedding~$\mathcal{E}_i$
with respect to each individual tree $T_i$, that is, there exists an ordering of the (non-leaf) vertices of~$T_i$ such that no two edges of $T_i$ cross, see \cref{fig:sim-embedding-trees-a} for an illustration. Since the leaves of each $T_i$ are all on layer~$1$, the embedding~$\mathcal{E}_i$ is unique and implies a total ordering~$<_j^i$ of the vertices of $T_i$
on every layer $j \in \{2, \dots, \ell\}$.
Therefore, we henceforth assume that $V_j(T_i)$ appears in the corresponding
vertex order $<_j^i$,
and if we combine all $<_j^i$ for $i \in \{1, \dots, k\}$,
we obtain a partial order, which we call~$\prec_j$.

The task is to find a total order $<_j$ of $V_j(\mathcal{F})$
extending the partial order $\prec_j$
for each $j \in \{2,\dots,\ell\}$
such that the total number of pairwise edge crossings implied by a corresponding straight-line realization of $\mathcal{F}$ is minimized.

We restrict the notion of an upward drawing even further since we require that for any directed edge $(u,v) \in T_i$, we have that $L(u) + 1 = L(v)$.
If our input does not fulfill this requirement, this can be achieved by subdividing edges which span several layers (as commonly done, e.g., in the Sugiyama framework).
In the following, we assume that $n$ is the number of vertices after subdivision
and let $n_1, \dots, n_k$ be the number of vertices of $T_1, \dots, T_k$, respectively.
Furthermore, we denote the number of vertices of tree~$T_i$ on layer~$j$ by $n_{i|j} = |V_j(T_i)|$.

\section{Two Trees on Arbitrarily Many Layers}
\label{sec:two-trees}

In this section, we assume that we are given a forest
$\mathcal{F} = \{T_1, T_2\}$ with embeddings~$\mathcal{E}_1$ and~$\mathcal{E}_2$.
We fix the drawing of $T_1$ according to~$\mathcal{E}_1$
and the only remaining task is to add the non-leaf vertices of~$T_2$
in the order prescribed by~$\mathcal{E}_2$
such that the number of crossings is minimized.
To this end, we describe a dynamic programming approach,
which leads to the following theorem.

\begin{theorem}
	\label{thm:Two-trees}
	Let $\mathcal{F}$ be an $n$-vertex layered forest of two rooted trees,
	where all leaves are assigned to layer~$1$ and have a fixed order,
	which prescribe a planar embedding of each tree individually.
	We can compute a drawing of $\mathcal{F}$
	where each tree is drawn in the prescribed planar embedding
	with the minimum number of crossings in $\oh(n^3)$ time.
\end{theorem}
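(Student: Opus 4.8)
The plan is to fix the drawing of $T_1$ according to $\mathcal{E}_1$ and build up the drawing of $T_2$ layer by layer, from layer~$2$ up to layer~$\ell$, using a dynamic program whose state records, for the current layer $j$, how many vertices of $T_2$ on layer~$j$ have been placed so far and — crucially — the position of the last such vertex relative to the fixed vertices of $T_1$ on layer~$j$. Because the order $<^2_j$ of $V_j(T_2)$ is fixed by $\mathcal{E}_2$, and the order of $V_j(T_1)$ is fixed, a partial interleaving on layer $j$ is determined by the number of $T_2$-vertices placed and the ``gap'' of $V_j(T_1)$ into which the most recently placed vertex fell. This is an $\oh(n_{1|j}\cdot n_{2|j})$-size set of states per layer, and $\sum_j n_{1|j} n_{2|j} \le \sum_j n_{1|j}\cdot\sum_j n_{2|j} \le n^2$ overall.

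First I would make precise the ``ideal position'' notion hinted at in \cref{fig:sim-embedding-trees-b}: when we decide to insert the next vertex $v\in V_j(T_2)$ into a particular gap among the $T_1$-vertices on layer~$j$, the edge from $v$ down to its (already placed) child on layer~$j-1$, together with the edge from $v$'s successor in $<^2_j$ and the constraints from $T_1$'s edges spanning layers $j-1,j,j+1$, determines a well-defined cost: the number of crossings this insertion creates with (i) edges of $T_1$ between layers $j-1$ and $j$, (ii) edges of $T_1$ between layers $j$ and $j+1$, and (iii) edges of $T_2$ already committed on the layer-$(j-1)$/layer-$j$ interface. The key structural claim to establish is that this cost is \emph{local}: it depends only on the chosen gap for $v$ and on the gap chosen for the previous $T_2$-vertex on the same layer (to count $T_2$–$T_2$ crossings between consecutive edges), plus a term inherited from the layer below that is already summarized in the DP state. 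I would prove this by the standard argument that two edges cross iff their endpoints appear in opposite orders on the two layers they span, so the crossing count decomposes into a sum over ordered pairs, and each pair is ``charged'' to the later of the two insertion decisions.

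The DP transition then goes: from a state (layer $j$, $t$ vertices of $T_2$ placed on layer $j$, last one in gap $g$), we either (a) place the $(t{+}1)$-st vertex of $V_j(T_2)$ in some gap $g'\ge g$, paying the local insertion cost described above, or (b) declare layer $j$ complete (once $t=n_{2|j}$) and move to layer $j{+}1$, carrying forward whatever partial information the next layer needs about the interface just fixed. To make (b) sound I need the interface between layers $j$ and $j{+}1$ to be summarizable in $\oh(1)$ or $\oh(n_{2|j+1})$ extra state; the saving grace is that once all of $V_j(T_2)$ is placed, the order on layer $j$ is fully determined, so the ``carry'' is just the final configuration of layer $j$, which the layer-$(j{+}1)$ states will reference through the positions of the children of layer-$(j{+}1)$ vertices — and those children lie on layer $j$, whose order is now fixed. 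Counting: $\oh(n^2)$ states, and each state has $\oh(n_{1|j})$ outgoing transitions (choices of the next gap), each evaluated in $\oh(1)$ amortized time after an $\oh(n^2)$ preprocessing of prefix-sum crossing counts; summing $\sum_j n_{1|j}^2 n_{2|j}\le n^3$ gives the claimed $\oh(n^3)$ bound.

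The main obstacle I anticipate is making the locality claim airtight across \emph{three consecutive layers simultaneously}: an inserted vertex $v$ on layer $j$ has an up-edge to layer $j{+}1$ whose other endpoint's position is not yet decided when we process layer $j$, so naively the cost of $v$'s placement is not known at insertion time. The fix — and the technical heart of the proof — is to charge each $T_2$-edge crossing to exactly one of its two endpoints (say, the lower one) and each $T_1$–$T_2$ crossing to the $T_2$-endpoint, and then to verify that with this charging scheme, the cost assigned when we insert $v$ on layer $j$ depends only on data available at that moment: the fixed order of $T_1$ on layers $j{-}1, j, j{+}1$, the already-fixed positions on layer $j{-}1$, and the gap chosen for the immediately preceding $T_2$-vertex on layer $j$. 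I would devote a lemma to this decomposition before assembling the DP, as everything else (state count, transition count, running time) is then routine bookkeeping.
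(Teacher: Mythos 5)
Your layer-sweep DP has a genuine gap: the state (layer $j$, number $t$ of placed vertices of $V_j(T_2)$, gap $g$ of the last one) does not have the optimal-substructure property. The cost of inserting the next vertex $v \in V_j(T_2)$ is dominated by the crossings of its down-edges with the edges of $T_1$ between layers $j-1$ and $j$, and these depend on the gaps into which $v$'s children were inserted on layer $j-1$. Those gaps were chosen by earlier transitions and are neither recorded in your state nor determined by it: two partial solutions reaching the same state can have placed the layer-$(j-1)$ vertices of $T_2$ in different gaps relative to $V_{j-1}(T_1)$, and then every subsequent layer-$j$ transition has a different cost, so the transition cost is path-dependent. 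Your locality lemma appeals to ``the already-fixed positions on layer $j-1$'', but in a dynamic program ``available data'' must mean ``encoded in the state''. The charging scheme you propose only relocates the problem: charging the crossings of an edge $(c,v)$ of $T_2$ to its lower endpoint $c$ requires knowing the not-yet-chosen position of $v$ on layer $j$, while charging them to the upper endpoint $v$ requires the position of $c$, which is not in the state. Enlarging the state to carry the full interleaving of the previous layer destroys the polynomial state count, and greedily committing layer $j-1$ before looking at layer $j$ is not optimal, since a vertex may trade crossings of its down-edges against crossings of its up-edge (the whole point of optimizing across layers).

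The paper sidesteps exactly this by indexing the DP by the tree structure of $T_2$ rather than by layers: $o[v,p]$ is the minimum number of $T_1$--$T_2$ crossings of the subtree rooted at $v$ when $v$ sits at position $p$ relative to $V_{L(v)}(T_1)$, and the cost $\cro_{j-1}(q,p)$ of each tree edge is paid inside the recursion, where both endpoint positions $q$ and $p$ are known; the interface of a subproblem is thus a single position and the table has $\oh(n_1 n_2)$ entries. The price of that formulation, and the actual technical heart of the paper, is that sibling subtrees are optimized independently, so it is not automatic that the chosen positions respect $<^2_j$ on each layer; \cref{cl:distancetoidealgap,cl:minvaluessequential,cl:ovaluesmin,cl:dp-positions} together with the tie-breaking rule (among equal-cost child positions, maximize $\cro_{j-1}$) are devoted to proving \cref{cl:t2planar}, i.e., that the output is embedded according to $\mathcal{E}_2$, after which minimality (\cref{cl:mincrossings}) and the $\oh(n_1^2 n_2)\subseteq\oh(n^3)$ bound are routine. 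If you wish to salvage a layer-sweep formulation, you would have to summarize in the state, for the layer below, the positions of all children of still-unplaced parents, which in general is the entire layer; so the subtree-based decomposition (or an argument replacing it) is the missing idea, not bookkeeping.
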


\begin{proof}
	As stated before, we fix the drawing of $T_1$
	according to~$\mathcal{E}_1$.
	Hence it remains to prove that our dynamic program
	embeds $T_2$ according to~$\mathcal{E}_2$,
	which we do in \cref{cl:t2planar}.
	In \cref{cl:mincrossings}, we show
	that the resulting drawing has the minimum number of crossings.
	This proves the correct behavior of our algorithm.
	In \cref{cl:runtime}, we also show the
	runtime bound of $\oh(n^3)$.
\end{proof}

\paragraph{Description of the Dynamic Program.}
Consider some layer $j \in \{2, \dots, \ell\}$
and index the vertices in $V_j(T_1)$ according $\mathcal{E}_1$
from left to right by $1, \dots, n_{1|j}$.
In a complete drawing, we define, for a vertex~$v$ of~$T_2$,
its \emph{position}~$p$ on layer~$j$ with respect to the 
index of the closest vertex of~$T_1$ to the left of~$v$.
If there is no such vertex, we set $p = 0$.
Let $C_v = \{ c_1, c_2, \dots, c_{\mathrm{indeg}(v)} \}$ be
the ordered set of children of~$v$ in $T_2$, which lie on layer~$j-1$,
where $\mathrm{indeg}(v)$ is the in-degree of~$v$.

For our dynamic program, we define a function~$o$,
which has as first parameter a vertex~$v$ of~$T_2$
and as second parameter a position~$p$ on layer~$L(v)$.
The value of~$o$ shall describe the number of crossings
in an optimal partial solution for the drawing of the subtree of~$T_2$
rooted at~$v$ and placed at position~$p$.
As usual in a dynamic program,
we compute a function value once and then save it in
a lookup table.
Additionally, we save the recursive dependencies that led to a value
to reconstruct a drawing in the end.
If $j \ge 3$, we define~$o$ as follows.

\begin{equation}
o[v,p] = \sum\limits_{i = 1}^{|C_v|} ~ \min\limits_{q \in \{0, \dots, n_{1|j-1}\} } \left( o[c_i,q] + \cro_{j-1}(q,p) \right) \nonumber
\end{equation}
where $\cro_{x}(y,z)$ is a
\emph{crossing function} describing
the number of crossings an edge between layers $x$
and \mbox{$x + 1$} admits if its source is arranged
at position $y$ (of layer~$x$) and its target is arranged
at position~$z$ (of layer~$x+1$).
If for some $c_i$, there is more than one position for $q$
resulting in a minimum value of~$o[v,p]$, we choose the position $q$ that maximizes $\cro_{j-1}(q,p)$.

For the recursive function~$o$, we add a terminating formulation
for the vertices on layer~$j = 2$.
Recall that for the leaves on layer~$1$,
there is a total order $<_1$ given.
Hence, for a vertex $v \in V_2(T_2)$, the position
of each child of~$v$ is fixed,
leading to the following simplified formulation of~$o$,
where $p_{c_i}$ is the given position of leaf $c_i$.
\begin{equation}
	o[v,p] = \sum\limits_{i = 1}^{|C_v|} \cro_1(p_{c_i},p) \nonumber
\end{equation}

To compute the value~$o^\star$ of the dynamic program as a whole,
we take the minimum of all values of~$o$ for the
root~$r_2$ of~$T_2$:
\begin{equation}
o^\star = \min_{p \in \{0, \dots, n_{1|L(r_2)}\}} o[r_2, p]. \nonumber
\end{equation}
We return a drawing corresponding to~$o^\star$, i.e.,
we specify for each vertex~$v$ of~$T_2$
its position with respect to~$T_1$ when computing~$o^\star$.
Finally, for vertices of~$T_2$ having the same position,
we arrange them in the order given by~$\mathcal{E}_2$.

\paragraph{Correctness.}
Next, we prove the correct behavior of our dynamic program
by showing that $T_2$ is embedded according to~$\mathcal{E}_2$
(\cref{cl:t2planar}) and by showing that the resulting drawing
has the minimum number of crossings (\cref{cl:mincrossings}).
Mainly because \cref{cl:t2planar} is rather intricate to prove,
we next introduce some more notation and concepts,
for which we show four claims
that lead to the proofs of these lemmas.

A key observation is that for a position~$p$ on layer~$j$,
there is precisely one \emph{ideal position}~$p^\star$ on layer~$j-1$
such that $\cro_{j-1}(p^\star,p) = 0$
and for two positions $p,q$ with $p < q$ on layer~$j$,
the ideal positions $p^\star, q^\star$ on layer~$j-1$ appear
in the same order, i.e., $p^\star < q^\star$.
(Imagine going down the gap of $\mathcal{E}_1$ where $p$ is located as illustrated in \cref{fig:sim-embedding-trees-b}.)
In \cref{cl:distancetoidealgap}, we formalize another
observation regarding ideal positions.
Essentially, the claim says that the further the endpoints
of an edge are away from a pair of position and ideal position,
the more crossings occur.
For simplicity, we assume henceforth that
each of the functions $o$ and $\cro_j$
returns $\infty$ for parameters outside of its domain.

\begin{myclaim}
	\label{cl:distancetoidealgap}
	On a layer~$j \in \{2, \dots, L(r_1)\}$,
	let $p \in \{0, \dots, n_{1|j}\}$ be a position
	and let $p^\star \in \{0, \dots, n_{1|j-1}\}$ be the ideal
	position of $p$ on layer~$j-1$.
	For any $x \in \mathbb{N}_0$, it holds that
	$\cro_{j-1}(p^\star \pm x,p) = x$ and
	$\cro_{j-1}(p^\star, p \pm (x+1)) > \cro_{j-1}(p^\star, p \pm x)) \ge x$.
\end{myclaim}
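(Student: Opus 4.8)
The plan is to unwind the definition of the crossing function $\cro_{j-1}$ in terms of how many vertices of $T_1$ on layers $j-1$ and $j$ lie between the actual endpoints of the edge and the ideal endpoints. First I would make precise what $\cro_{x}(y,z)$ counts: an edge of $T_2$ going from position $y$ on layer $x$ to position $z$ on layer $x+1$ crosses an edge of $T_1$ exactly when that $T_1$-edge "separates" the two endpoints, i.e., when the $T_1$-edge passes to one side at layer $x$ and to the other side at layer $x+1$. Because we fixed the drawing of $T_1$ according to the planar embedding $\mathcal{E}_1$, the edges of $T_1$ between layers $x$ and $x+1$ are themselves non-crossing, so they inherit a linear left-to-right order; an edge of $T_2$ from position $y$ to position $z$ crosses precisely those $T_1$-edges whose lower endpoint index is on one side of $y$ and whose upper endpoint index is on the other side of $z$. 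The ideal position $p^\star$ of $p$ is exactly the position on layer $j-1$ such that no $T_1$-edge separates $p^\star$ from $p$, which exists and is order-preserving by the observation stated just before the claim.

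The first part, $\cro_{j-1}(p^\star \pm x, p) = x$, I would prove by noting that moving the lower endpoint from $p^\star$ by one unit to the left or right crosses exactly one more $T_1$-edge between layers $j-1$ and $j$ — namely the unique edge whose lower endpoint sits in the gap we just stepped over. Iterating $x$ times and using that these crossed edges are all distinct (they leave their lower endpoints in $x$ distinct consecutive gaps on layer $j-1$, all on the same side of $p^\star$) gives exactly $x$ crossings. Here one has to be slightly careful at the boundary: if $p^\star \pm x$ leaves the range $\{0,\dots,n_{1|j-1}\}$, the convention that $\cro$ returns $\infty$ handles it, and this is consistent since stepping past the outermost $T_1$-vertex is impossible. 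A subtle point worth spelling out: every step of the lower endpoint really does cross a $T_1$-edge incident to the vertex we step over, and it crosses it exactly once — this uses that $T_1$ is drawn planar and upward, so each such $T_1$-vertex on layer $j-1$ has at least its (unique) parent edge going up to layer $j$, and that edge lands on the correct side of $p$ relative to $p^\star$.

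For the second part, $\cro_{j-1}(p^\star, p \pm (x+1)) > \cro_{j-1}(p^\star, p \pm x) \ge x$, the idea is symmetric but on the upper layer: moving the target from $p$ outward by one unit means stepping over one more $T_1$-vertex on layer $j$, and at least one $T_1$-edge incident to that vertex (its child edge down to layer $j-1$, which exists because a vertex of $T_1$ on layer $j \ge 2$ is not a leaf and hence has at least one child) now separates the endpoints, contributing a fresh crossing not counted before. This gives strict monotonicity in $x$ and, since $\cro_{j-1}(p^\star, p) = 0$, the lower bound $\ge x$ follows by induction. The main obstacle I anticipate is the bookkeeping of \emph{which} $T_1$-edges get crossed and arguing they are genuinely distinct across the $x$ (resp. $x+1$) steps — i.e., that no $T_1$-edge is double-counted and that each step contributes at least one new one — which requires carefully using the planarity of $T_1$'s drawing and the order-preserving property of ideal positions stated before the claim, plus the fact that interior (non-leaf) $T_1$-vertices on layers $\ge 2$ have in-degree $\ge 1$. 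Once that combinatorial correspondence between unit steps and newly crossed $T_1$-edges is established, both displayed equalities and inequalities are immediate.
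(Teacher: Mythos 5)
Your proposal is correct and follows essentially the same argument as the paper: starting from the crossing-free pair $(p^\star,p)$, each unit move of the lower endpoint changes sides with one $T_1$-vertex on layer $j-1$ whose unique parent edge contributes exactly one new crossing, while each unit move of the upper endpoint changes sides with one $T_1$-vertex on layer $j$ whose child edge(s) contribute at least one new crossing. Your additional bookkeeping (the separation characterization and distinctness of the newly crossed edges) just makes explicit what the paper's shorter proof leaves implicit.
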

\begin{proof}
	Consider an edge $(u, v)$ of $T_2$
	with its endpoints being placed at $p^\star$ and~$p$.
	We know that $\cro_{j-1}(p^\star, p) = 0$.
	Now, for every position that we move $u$ ($v$, resp.) to the left or right
	of~$p^\star$ (of~$p$), we change sides with a vertex~$w$ of $T_1$.
	Because $w$ has exactly (at least) one incident
	edge going upwards (downwards) to its parent (a child)
	that we have not crossed before,
	the number of crossings increases by exactly (at least)~1.
\end{proof}

\begin{figure}
	\centering
	\begin{minipage}[b]{0.478 \linewidth}
		{\includegraphics[page=1]{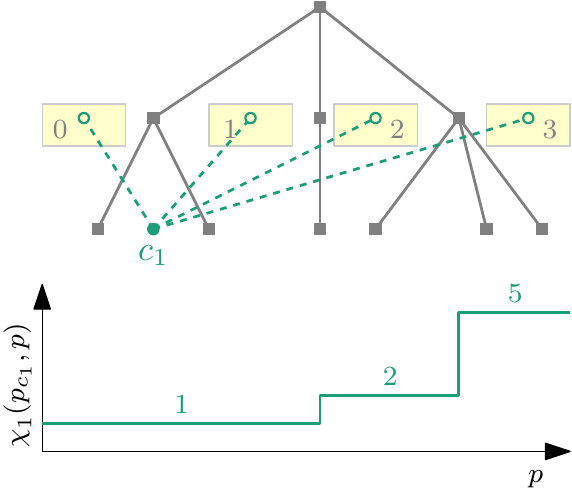} \raggedright}
		
		\smallskip
		
		{\footnotesize \begin{spacing}{1}
				\textsf{\bfseries (a)} Crossings of
				the edge $(c_1, v)$ dependent on
				the position $p$ of $v$.\end{spacing}}
		
		\bigskip
		
		{\includegraphics[page=3]{crossing-functions} \raggedright}
		
		\smallskip
		
		{\footnotesize \begin{spacing}{1}
				\textsf{\bfseries (c)} Crossings of
				the edge $(c_3, v)$ dependent on
				the position $p$ of $v$.\end{spacing}}
	\end{minipage}
	\hfill
	\begin{minipage}[b]{0.478 \linewidth}
		{\includegraphics[page=2]{crossing-functions} \raggedright}
		
		\smallskip
		
		{\footnotesize \begin{spacing}{1}
				\textsf{\bfseries (b)} Crossings of
				the edge $(c_2, v)$ dependent on
				the position $p$ of $v$.\end{spacing}}
		
		\bigskip
		\bigskip
		\bigskip
		\medskip
		\smallskip
				
		{\includegraphics[page=4]{crossing-functions} \raggedright}
		
		\smallskip
		
		{\footnotesize \begin{spacing}{1}
				\textsf{\bfseries (d)} Sum of  the three
				crossing functions gives $o[v,p]$.\end{spacing}}
	\end{minipage}
	
	\caption{Example of a vertex $v$ of $V_2(T_2)$ having three children $c_1, c_2, c_3$,
			where the position $p_c$ of a child $c$ and the position $p$ of $v$
			determine the value of $o[v,p]$.
			Here, we perceive $\cro$ and $o$ as functions dependent on $p$.}
	\label{fig:crossing-functions}
\end{figure}

For a vertex $v \in V_j(T_2)$ on a layer~$j$,
we define $\optpos(v)$ as the set of
every position~$p$ where $o[v,p]$ is minimum.
We analyze the properties of $\optpos$ in \cref{cl:minvaluessequential}.
\newpage
\begin{myclaim}
	\label{cl:minvaluessequential}
	For every layer $j \in \{2, \dots, L(r_2)\}$,
	let $v_1, v_2, \dots, v_{n_{2|j}}$
	be the vertices in $V_j(T_2)$ in the order of $\mathcal{E}_2$.
	It holds that
	$\min \optpos(v_1) \le
	\min \optpos(v_2) \le \ldots \le
	\min \optpos(v_{n_{2|j}})$ and
	$\max \optpos(v_1) \le
	\ldots \le
	\max \optpos(v_{n_{2|j}})$.
	
	Further, for every $v \in V_j(T_2)$,
	$\optpos(v)$ is an interval of
	natural numbers and, for any $x \in \mathbb{N}_0$,
	$o[v, \min \optpos(v) - (x+1)] > o[v, \min \optpos(v) - x]	\ge x$ and
	$o[v, \max \optpos(v) + (x+1)] > o[v, \max \optpos(v) + x] \ge x$.
\end{myclaim}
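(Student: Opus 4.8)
The plan is to proceed by induction on the layer $j$, establishing the monotonicity of $\min\optpos$ and $\max\optpos$ together with the "convexity with slope $\ge 1$" property of each $o[v,\cdot]$ in a single induction, since these facts reinforce one another. For the base case $j = 2$, recall that $o[v,p] = \sum_i \cro_1(p_{c_i}, p)$ where the child positions $p_{c_i}$ are fixed by $<_1$. By Claim~\ref{cl:distancetoidealgap}, each summand $\cro_1(p_{c_i},\cdot)$ is a function that decreases by exactly $1$ per step as $p$ approaches the ideal position of $p_{c_i}$'s "gap" and increases by at least $1$ per step as $p$ moves away; thus $o[v,\cdot]$ is a sum of such unimodal, piecewise-"slope $\ge 1$" functions, hence itself convex with the stated left/right growth bounds, and its set of minimizers is an interval. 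The monotonicity $\min\optpos(v_1) \le \dots$ across the vertices $v_1, v_2, \dots$ of $V_2(T_2)$ in the order of $\mathcal{E}_2$ follows because the children of $v_t$ in $<_1$ lie entirely to the left of the children of $v_{t+1}$ (planarity of $\mathcal{E}_2$), so the ideal positions, and hence the minimizing intervals, shift weakly rightward.

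\smallskip

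For the inductive step at layer $j \ge 3$, we have
\[
o[v,p] = \sum_{i=1}^{|C_v|} \min_{q} \bigl( o[c_i,q] + \cro_{j-1}(q,p) \bigr).
\]
The key sub-step is to analyze the inner function $g_i(p) := \min_q \bigl( o[c_i,q] + \cro_{j-1}(q,p) \bigr)$. Using the induction hypothesis that $o[c_i,\cdot]$ is convex with slope at least $1$ on each side of its minimizing interval $\optpos(c_i)$, together with Claim~\ref{cl:distancetoidealgap} applied to $\cro_{j-1}(\cdot,p)$ (which also has unit slope on each side of the ideal position $p^\star$), I expect to show that $g_i$ is again convex with the same left/right growth behavior: intuitively, $g_i(p)$ equals the value of $o[c_i,\cdot]$ evaluated at (or near) $p^\star$, capped below by the cost of the optimal trade-off, and since $p^\star$ moves monotonically with $p$ and both constituent functions have slope $\ge 1$ away from their optima, the composition inherits this. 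Summing over $i$ preserves convexity and the slope bounds, and the interval property of the minimizer set follows. Monotonicity of $\min\optpos$ and $\max\optpos$ along $V_j(T_2)$ then follows from the induction hypothesis applied one layer down (the children of consecutive $v_t, v_{t+1}$ interleave correctly by planarity of $\mathcal{E}_2$) combined with the monotonicity of ideal positions, i.e., $p < q \Rightarrow p^\star < q^\star$, stated just before Claim~\ref{cl:distancetoidealgap}.

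\smallskip

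The main obstacle I anticipate is the tie-breaking subtlety baked into the dynamic program: when several $q$ attain the minimum in $g_i(p)$, the program picks the one maximizing $\cro_{j-1}(q,p)$. Verifying that this particular choice is consistent across neighboring values of $p$ — so that $g_i$ is genuinely convex rather than merely unimodal-with-plateaus, and so that $\optpos(v)$ really is a contiguous interval — will require a careful case analysis of how $\optpos(c_i)$ sits relative to the ideal position $p^\star$ (whether $p^\star$ lies left of, inside, or right of $\optpos(c_i)$), and tracking which subcase applies as $p$ increments by one. A secondary, more bookkeeping-heavy obstacle is handling the boundary positions $0$ and $n_{1|j}$ and the convention that $o$ and $\cro_j$ return $\infty$ outside their domains, so that the "slope $\ge 1$" inequalities are not accidentally violated at the ends of the range; I would dispatch this by noting that an $\infty$ value only strengthens the strict inequality $o[v,\cdot-(x+1)] > o[v,\cdot-x]$.
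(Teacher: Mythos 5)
Your overall strategy is the same as the paper's: induction over the layers, treating $o[v,\cdot]$ on layer~$2$ as a sum of per-child crossing functions, handling layers $j\ge 3$ via the inner functions $g_i(p)=\min_q\bigl(o[c_i,q]+\cro_{j-1}(q,p)\bigr)$, and deducing the monotonicity of $\min\optpos$ and $\max\optpos$ from the ordering of the children one layer below together with the monotonicity of ideal positions. However, the justification you give for the central step has a genuine gap: the per-child functions are \emph{not} convex, so ``summing preserves convexity'' is not available. Concretely, if $p$ successively passes two vertices $w_1,w_2\in V_2(T_1)$ whose leaf children all lie to the left of the fixed leaf position $p_{c_i}$, then $\cro_1(p_{c_i},\cdot)$ decreases first by $\mathrm{indeg}(w_1)$ and then by $\mathrm{indeg}(w_2)$, e.g.\ by $1$ and then by $5$, so its increments are not monotone. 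Each summand is only weakly unimodal with integer steps of absolute value at least~$1$ outside its minimizing interval, and a sum of such functions is in general not even unimodal (one summand may climb steeply while another descends slowly, creating a flat stretch away from the minimum). What rescues the claim is the correlation between the summands induced by the common tree $T_1$: when $p$ passes a vertex $w$ of $V_2(T_1)$, the change of $\sum_i \cro_1(p_{c_i},p)$ equals $\sum_{u}\bigl(\#\{i: p_{c_i}<p_u\}-\#\{i: p_{c_i}>p_u\}\bigr)$, where $u$ ranges over the children of $w$; since the inner quantity is nondecreasing in the position of $u$ and the children of the vertices of $V_2(T_1)$ appear in blocks from left to right, the increments of the sum are first negative, then zero, then positive, which yields the interval property and the slope-$\ge 1$ growth. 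The paper's proof invokes exactly this interplay (its argument that functions before their minimum decrease while those past it increase, with the minimum of the sum where the individual minima are ``equally distributed'', and a weighted version of this in the inductive step); replacing it by generic convexity of the summands, as you propose both in the base case and for the $g_i$ in the inductive step, would fail.

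A smaller point: the ``main obstacle'' you single out, the tie-breaking rule of the dynamic program, is irrelevant to this claim. The value $o[v,p]$ is defined as a minimum over $q$, so it, and hence $\optpos(v)$ and every inequality in the statement, is independent of which minimizer the algorithm records; the tie-breaker only becomes relevant for \cref{cl:dp-positions} and, through it, for \cref{cl:t2planar}. Your remark on the boundary convention ($\infty$ outside the domain) is fine and matches the paper.
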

\begin{proof}
	We show this claim by induction over the layers $j = 2, 3, \dots$
	
	For $j = 2$ and every $v \in V_j(T_2)$,
	the children of $v$ have fixed positions and, therefore,
	$o[v,p]$ only depends on the number of crossings
	induced by the position~$p \in \{0, \dots, n_{1|j}\}$;
	see \cref{fig:crossing-functions} for an example.
	We next show that $\optpos(v)$ is an interval.
	Observe that $o[v,p] = \sum_{i = 1}^{|C_v|} \cro_1(p_{c_i},p)$
	is a sum of discrete functions (with variable~$p$)
	where each admits its minimum value
	for one or two neighboring values of~$p$
	and apart from at most two values at or around this minimum,
	all of these functions increase or decrease
	by the same amount if we add or subtract~1 to~$p$,
	which follows by the argument presented in the proof
	of \cref{cl:distancetoidealgap}.
	(These functions here are weakly unimodal, i.e.,
	they have a global minimum and they increase
	monotonously when moving away from that minimum.)
	Now to find the positions where that sum is minimum,
	we traverse the values of its domain:
	we start with $p = 0$.
	If we increase $p$ by one,
	then all functions that have not yet reached
	its minimum decrease, while the functions that
	had already reached their minimum increase
	by the same amount.
	Hence, this sum is minimum in the interval
	of the domain that has the minima of
	the single crossing functions equally distributed
	on the left and on the right side.
	Furthermore, for each position further to the left or right,
	the sum increases by at least one.
	It remains to show that the minima and maxima
	of $\optpos(v_1), \optpos(v_2), \dots, \optpos(v_{n_{2|j}})$
	increase monotonously.
	Since the children of the vertices on layer~$2$
	(i.e., the leaves) are ordered,
	the minima and maxima of all crossing functions are ordered
	and so are the minima and maxima of the sums.
	
	Now consider $j > 2$.
	Again $o[v,p]$ is a sum,
	but now we add, for every child~$c$ of $v$
	and a position $q$, $o[c,q]$ and $\cro_{j-1}(q,p)$.
	The sum of minima is again a sum of unimodal
	functions with similar properties as before:
	the $\cro_{j-1}(q,p)$ summands increase and decrease
	around their minimum as before,
	while the $o[c,q]$ summands increase and decrease
	before and after their minimum at least
	as much as a $\cro_{j-1}(q,p)$ summand
	due to the induction hypothesis.
	(They behave like a weighted $\cro_{j-1}(q,p)$ summand.)
	Hence, if we sum them up,
	we apply a weighted version of the previous argument
	to obtain the properties stated in the claim.
	In particular, the minima and maxima
	of $\optpos(v_1), \optpos(v_2), \dots, \optpos(v_{n_{2|j}})$
	increase monotonously since the
	minima and maxima of $\optpos$ of the children on layer~$j-1$
	are ordered by the induction hypothesis.
\end{proof}

For a vertex $u$ on a layer~$j-1$,
we define the \emph{natural position} $\natpos(u, p)$ of $u$
with respect to the position $p$ of its parent vertex
on layer~$j$ as
\begin{equation*}
	\natpos(u,p) =
	\begin{cases}
		p^\star, & \textrm{ if } p^\star \in \optpos(u) \\
		\max \optpos(u), & \textrm{ if } p^\star > \max \optpos(u) \\
		\min \optpos(u), & \textrm{ if } p^\star < \min \optpos(u).
	\end{cases}
\end{equation*}
In \cref{cl:ovaluesmin}, we describe, for a vertex $v$,
the behavior of the natural positions of $v$'s
children and their relationship to $o[v,p]$.

\begin{myclaim}
	\label{cl:ovaluesmin}
	For a vertex $v \in V_j(T_2)$ on a layer~$j$,
	let $c_1, \dots, c_{|C_v|}$ be the children of~$v$.
	For any position~$p \in \{0, \dots, n_{1|j}\}$,
	it holds that $\natpos(c_1, p) \le \ldots \le \natpos(c_{|C_v|}, p)$ and
	$o[v,p] = \sum_{i = 1}^{|C_v|} (o[c_i, \natpos(c_i, p)] + \cro_{j-1}(\natpos(c_i, p),p))$.
\end{myclaim}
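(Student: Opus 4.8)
The plan is to unwind the recursive definition of $o$: for each child $c_i$ of $v$ I will identify, among all positions $q$ on layer $j-1$, the one the dynamic program selects for the $i$-th summand of $o[v,p]$, and show it equals $\natpos(c_i,p)$. Once this is done, the claimed identity $o[v,p]=\sum_i\bigl(o[c_i,\natpos(c_i,p)]+\cro_{j-1}(\natpos(c_i,p),p)\bigr)$ is immediate from the definition of $o$, and the ordering $\natpos(c_1,p)\le\cdots\le\natpos(c_{|C_v|},p)$ will follow separately from \cref{cl:minvaluessequential}. The base layer $j=2$ is trivial: each child $c_i$ is a leaf at its fixed position $p_{c_i}$, so $\optpos(c_i)=\{p_{c_i}\}$, $o[c_i,p_{c_i}]=0$, and hence $\natpos(c_i,p)=p_{c_i}$ for every $p$; the simplified formula $o[v,p]=\sum_i\cro_1(p_{c_i},p)$ is exactly the claimed identity, and $p_{c_1}\le\cdots\le p_{c_{|C_v|}}$ because $<_1$ respects $\mathcal{E}_2$.

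For $j\ge 3$, fix a child $c=c_i$ and study the integer-valued function $g(q)=o[c,q]+\cro_{j-1}(q,p)$ (with $g(q)=\infty$ off the valid domain). By \cref{cl:distancetoidealgap}, $q\mapsto\cro_{j-1}(q,p)$ equals $|q-p^\star|$: it has its unique minimum $0$ at the ideal position $p^\star$ and grows by exactly $1$ per unit step away from $p^\star$. By \cref{cl:minvaluessequential}, $q\mapsto o[c,q]$ is constant at its minimum on the interval $\optpos(c)=[\min\optpos(c),\max\optpos(c)]$ and strictly increasing outside it, hence grows by at least $1$ per unit step away from that interval. I then follow the three cases of the definition of $\natpos$. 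If $p^\star\in\optpos(c)$, then $q=p^\star$ minimizes both summands at once while moving to any other $q$ strictly increases at least one of them, so $p^\star=\natpos(c,p)$ is the unique minimizer of $g$. If $p^\star>\max\optpos(c)$, a short comparison using the two growth rates shows that every minimizer of $g$ lies in $[\max\optpos(c),p^\star]$, that on this interval $g$ is non-increasing towards the left (each step left: $o$ up by $\ge 1$, $\cro$ down by exactly $1$) while $\cro_{j-1}(\cdot,p)$ is strictly decreasing, so the minimizers form a left-prefix of the interval and the tie-break ``take $q$ maximizing $\cro_{j-1}(q,p)$'' picks its left endpoint $\max\optpos(c)=\natpos(c,p)$. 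The case $p^\star<\min\optpos(c)$ is symmetric. Thus in every case the dynamic program selects $q=\natpos(c,p)$, which gives the displayed identity.

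For the ordering, write $\optpos(c_i)=[a_i,b_i]$; by \cref{cl:minvaluessequential} we have $a_1\le\cdots\le a_{|C_v|}$ and $b_1\le\cdots\le b_{|C_v|}$, and in all three cases above $\natpos(c_i,p)=\max\bigl(a_i,\min(b_i,p^\star)\bigr)$, i.e.\ $p^\star$ clamped into $[a_i,b_i]$. Since $\max$ and $\min$ are monotone in each argument, clamping a fixed value into this rightward-shifting family of intervals is monotone in $i$, so $\natpos(c_1,p)\le\cdots\le\natpos(c_{|C_v|},p)$.

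The main obstacle is the case $p^\star\notin\optpos(c)$: one must verify simultaneously that $\natpos(c,p)$ is among the minimizers of $g$ and that the tie-breaking rule cannot move the selected position past $\natpos(c,p)$ in either direction. Both hinge on the exact ``$o$ grows by at least $1$, $\cro$ grows by exactly $1$'' bookkeeping supplied by \cref{cl:distancetoidealgap,cl:minvaluessequential}; everything else is routine case analysis.
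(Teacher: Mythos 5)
Your proof is correct and follows essentially the same route as the paper: both rest on the per-child bookkeeping that $o[c_i,\cdot]$ grows by at least one per step away from $\optpos(c_i)$ while $\cro_{j-1}(\cdot,p)$ changes by exactly one per step (\cref{cl:distancetoidealgap,cl:minvaluessequential}), the paper packaging this as a two-case proof by contradiction against a hypothetical better position $\hat q$ and you as a direct unimodality argument, with the ordering of the $\natpos(c_i,p)$ obtained from \cref{cl:minvaluessequential} in both. Two minor remarks: in the case $p^\star>\max\optpos(c)$ your parenthetical swaps the roles (stepping left within $[\max\optpos(c),p^\star]$, $o$ decreases by at least $1$ and $\cro_{j-1}$ increases by exactly $1$), though your stated conclusion that $g$ is non-increasing towards the left is the correct and needed one; and the tie-breaking analysis is superfluous for this claim (it belongs to \cref{cl:dp-positions})---here it suffices that $\natpos(c_i,p)$ is \emph{some} minimizer of $o[c_i,q]+\cro_{j-1}(q,p)$.
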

\begin{proof}
	We partition the children of~$v$ into three groups:
	if for a child $c_i$ (where $i \in \{1, \dots, |C_v|\}$),
	$p^\star \in \optpos(c_i)$, we set $q_i = p^\star$.
	If for a child $c_i$, $p^\star > \max \optpos(c_i)$,
	we set $q_i = \max \optpos(c_i)$,
	and, symmetrically, if $p^\star < \min \optpos(c_i)$,
	we set $q_i = \min \optpos(c_i)$.
	By \cref{cl:minvaluessequential},
	we observe that $q_1 \le \ldots \le q_{|C_v|}$.
	Since $q_i = \natpos(c_i, p)$, this proves the first part of the claim.
	
	Now for the second part,
	if $o[v,p] \ne \sum_{i = 1}^{|C_v|} (o[c_i, q_i] + \cro_{j-1}(q_i,p))$,
	then for some~$i$,
	$o[c_i, q_i] + \cro_{j-1}(q_i,p) > \min_{q' \in \{0, \dots, n_{1|j-1}\}} (o[c_i, q'] + \cro_{j-1}(q',p))$.
	Let $\hat{q} \in \{0, \dots, n_{1|j-1}\}$
	be a position such that
	$o[c_i, q_i] + \cro_{j-1}(q_i,p) > o[c_i, \hat{q}] + \cro_{j-1}(\hat{q},p)$.
	Since we know $o[c_i, q_i] \le o[c_i, \hat{q}]$,
	it follows that $\cro_{j-1}(q_i,p) > \cro_{j-1}(\hat{q},p)$.
	
	First note that $p^\star \notin \optpos(c_i)$
	because otherwise $\cro_{j-1}(q_i,p) < \cro_{j-1}(\hat{q},p)$.
	It follows that $q_i$ is the minimum or maximum position
	of $\optpos(c_i)$~-- assume w.l.o.g.\ that $q_i = \max \optpos(c_i)$.
	Then, $q_i < \hat{q}$ (and hence $\hat{q} \notin \optpos(c_i)$)
	because otherwise again $\cro_{j-1}(q_i,p) < \cro_{j-1}(\hat{q},p)$.
	
	We distinguish two cases.
	The first case is $q_i < \hat{q} \le p^\star$. 
	By \cref{cl:distancetoidealgap},
	we know that $\cro_{j-1}(q_i,p) - \cro_{j-1}(\hat{q},p) = \hat{q} - q_i$.
	By \cref{cl:minvaluessequential},
	we know that $o[c_i, \hat{q}] - o[c_i, q_i] \ge \hat{q} - q_i$.
	Hence, we have $o[c_i, \hat{q}] - o[c_i, q_i] \ge \cro_{j-1}(q_i,p) - \cro_{j-1}(\hat{q},p)$,
	which we can reformulate as
	$o[c_i, q_i] + \cro_{j-1}(q_i,p) \le o[c_i, \hat{q}] + \cro_{j-1}(\hat{q},p)$,
	which contradicts our initial assumption.
	
	The second case is $q_i < p^\star < \hat{q}$.
	Now we have
	$\cro_{j-1}(q_i,p) = p^\star - q_i$
	and $\cro_{j-1}(\hat{q},p) = \hat{q} - p^\star$.
	If we add up these two equations, we get
	$\cro_{j-1}(q_i,p) + \cro_{j-1}(\hat{q},p) = \hat{q} - q_i$.
	As we still have $o[c_i, \hat{q}] - o[c_i, q_i] \ge \hat{q} - q_i$,
 	we get $o[c_i, q_i] + \cro_{j-1}(q_i,p) \le o[c_i, \hat{q}] - \cro_{j-1}(\hat{q},p)$,
 	which, of course, also contradicts our initial assumption.
\end{proof}

Now in the last claim, which is \cref{cl:dp-positions},
we directly investigate the positions
that are chosen by our dynamic program
as the positions of the children of a vertex~--
they turn out to be the natural positions.

\begin{myclaim}
	\label{cl:dp-positions}
	For a vertex $v \in V_j(T_2)$ on a layer~$j$
	and a position~$p \in \{0, \dots, n_{1|j}\}$,
	the dynamic program selects
	$\natpos(c_1, p), \dots, \natpos(c_{|C_v|}, p)$
	as the positions of $v$'s children
	$c_1, \dots, c_{|C_v|}$ on layer~$j-1$.
\end{myclaim}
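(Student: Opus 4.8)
The plan is to prove \cref{cl:dp-positions} by reducing it to the previous claims, in particular by showing that the $q$-value the dynamic program selects for each child $c_i$ coincides with $\natpos(c_i,p)$. Recall that for a fixed vertex~$v$ and a fixed position~$p$, the dynamic program picks, for each child $c_i$ independently, a position $q$ minimizing $o[c_i,q] + \cro_{j-1}(q,p)$; among all minimizers it takes the one maximizing $\cro_{j-1}(q,p)$. So I must show two things: first, that $\natpos(c_i,p)$ is one of the minimizers of $o[c_i,q] + \cro_{j-1}(q,p)$, and second, that among those minimizers it is the one maximizing $\cro_{j-1}(q,p)$ (so that the tie-breaking rule actually selects it).

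First I would handle the claim that $\natpos(c_i,p)$ is a minimizer. Fix a child $c = c_i$ and write $q^{\mathrm{nat}} = \natpos(c,p)$. By \cref{cl:ovaluesmin} we already know $o[v,p] = \sum_{i=1}^{|C_v|}\bigl(o[c_i,\natpos(c_i,p)] + \cro_{j-1}(\natpos(c_i,p),p)\bigr)$; but by definition of the dynamic program, $o[v,p] = \sum_{i=1}^{|C_v|} \min_{q}\bigl(o[c_i,q] + \cro_{j-1}(q,p)\bigr)$. Since each summand of the second expression is a lower bound for the corresponding summand of the first, equality of the sums forces equality term by term, i.e. $o[c,q^{\mathrm{nat}}] + \cro_{j-1}(q^{\mathrm{nat}},p) = \min_{q}\bigl(o[c,q] + \cro_{j-1}(q,p)\bigr)$. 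Hence $q^{\mathrm{nat}}$ is indeed a minimizer. This is the easy half and follows almost immediately from \cref{cl:ovaluesmin}.

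The harder half is the tie-breaking: I need to show that $q^{\mathrm{nat}} = \natpos(c,p)$ maximizes $\cro_{j-1}(q,p)$ over all minimizers $q$ of $o[c,q] + \cro_{j-1}(q,p)$. I would argue by contradiction: suppose some other minimizer $\hat q$ has $\cro_{j-1}(\hat q,p) > \cro_{j-1}(q^{\mathrm{nat}},p)$. I would split according to where $q^{\mathrm{nat}}$ sits relative to $\optpos(c)$. If $p^\star \in \optpos(c)$, then $q^{\mathrm{nat}} = p^\star$ and $\cro_{j-1}(q^{\mathrm{nat}},p) = 0$, so $\cro_{j-1}(\hat q,p) > 0$ means $\hat q \ne p^\star$; but then $o[c,\hat q] \ge o[c,p^\star]$ (since $p^\star$ is an optimum position) and $\cro_{j-1}(\hat q,p) > 0 = \cro_{j-1}(p^\star,p)$, so $o[c,\hat q] + \cro_{j-1}(\hat q,p) > o[c,p^\star] + \cro_{j-1}(p^\star,p)$, contradicting that $\hat q$ is a minimizer. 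The remaining case, say $q^{\mathrm{nat}} = \max\optpos(c) < p^\star$ (the symmetric case being analogous), is where the quantitative estimates from \cref{cl:distancetoidealgap} and \cref{cl:minvaluessequential} are needed. Here $\cro_{j-1}(\hat q,p) > \cro_{j-1}(q^{\mathrm{nat}},p)$ together with the fact that moving toward $p^\star$ strictly decreases $\cro_{j-1}$ (by \cref{cl:distancetoidealgap}) pins down where $\hat q$ can be, and then the growth bound $o[c,q^{\mathrm{nat}} - x] \ge$ (and $o[c,q^{\mathrm{nat}}+x] \ge$) from \cref{cl:minvaluessequential} lets me compare $o[c,\hat q] + \cro_{j-1}(\hat q,p)$ against $o[c,q^{\mathrm{nat}}] + \cro_{j-1}(q^{\mathrm{nat}},p)$ and conclude that the value at $\hat q$ is strictly larger, so $\hat q$ is not a minimizer after all — contradiction. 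Essentially this re-runs the two-case analysis already carried out inside the proof of \cref{cl:ovaluesmin}, now with strict inequalities to handle the tie-breaking rule.

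The main obstacle I expect is bookkeeping in that last case: one has to be careful about whether $\hat q$ lies between $q^{\mathrm{nat}}$ and $p^\star$ or strictly beyond $p^\star$ on the other side, and to invoke the correct branch of \cref{cl:distancetoidealgap} (the "$\cro(p^\star \pm x,p) = x$" branch) so that the $\cro$-decrease exactly cancels or is dominated by the $o$-increase. Once those inequalities are lined up, the contradiction is immediate, and combining the two halves gives that the dynamic program selects exactly $\natpos(c_1,p), \dots, \natpos(c_{|C_v|},p)$, which is what the claim asserts. Note that \cref{cl:ovaluesmin} also guarantees these are weakly increasing, which is consistent with there being a valid assignment of positions to the children.
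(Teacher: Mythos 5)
Your proposal is correct and follows essentially the same two-step structure as the paper's proof: first show that $\natpos(c_i,p)$ is among the minimizers of $o[c_i,q]+\cro_{j-1}(q,p)$ (your derivation of this from the equation in \cref{cl:ovaluesmin} is a slight shortcut; the paper instead re-derives minimality from \cref{cl:distancetoidealgap,cl:minvaluessequential}), and then argue via the tie-breaking rule that this position is the one selected. The only loose end is that you exclude minimizers with strictly \emph{larger} $\cro_{j-1}$-value but not distinct minimizers with \emph{equal} value; the paper closes this by observing that any other minimizer $p'$ lies outside $\optpos(c_i)$, so $o[c_i,p'] > o[c_i,\natpos(c_i,p)]$ and hence $\cro_{j-1}(p',p) < \cro_{j-1}(\natpos(c_i,p),p)$ strictly, which makes $\natpos(c_i,p)$ the unique position the tie-breaker can choose (in your framework this follows in one extra line, since an equal-$\cro$, equal-$o$ minimizer would have to lie in $\optpos(c_i)$ on the far side of $p^\star$, which is impossible).
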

\begin{proof}
	Recall that, for any $i \in \{1, \dots, |C_v|\}$,
	if there is more than one position for $c_i$ resulting in a minimum value of~$o[v,p]$,
	the position of $q$ with the maximum value of $\cro_{j-1}(q,p)$ is used as a tie-breaker rule.
	If $\natpos(c_i, p) = p^\star$,
	then $p^\star$ is the only position of $c_i$
	that can lead to a minimum value of $o[v,p]$
	and our claim is true.
	
	Now, due to symmetry,
	we assume w.l.o.g.\ that $\natpos(c_i, p) = \max \optpos(c_i)$.
	Let $p' \ne \natpos(c_i, p)$ be a position of $c_i$
	yielding a minimum value of $o[v,p]$.
	The position $p'$ cannot lie within $\optpos(c_i)$ by~\cref{cl:distancetoidealgap} since this would result in a larger number of crossings, while $o[c_i, p'] = o[c_i, \natpos(c_i, p)]$.
	Hence, $\natpos(c_i, p) < p'$.
	By \cref{cl:minvaluessequential},
	$o[c_i, p'] - o[c_i, \natpos(c_i, p)] \ge p' - \natpos(c_i, p)$.
	This means, that, for each position
	further to the right of $\natpos(c_i, p)$, the value
	of the dynamic program for $c_i$ increases by at least one,
	while the number of crossings according to the function
	$\cro_{j-1}$ increases by exactly one
	(see \cref{cl:distancetoidealgap}).
	Thus, $\natpos(c_i, p)$ is one
	(of possibly several) position(s) of $c_i$
	admitting a minimum value of $o[v,p]$.
	If $p'$ also admits a minimum value of $o[v,p]$,
	but $o[c_i, p'] > o[c_i, \natpos(c_i, p)]$,
	it follows that $\cro_{j-1}(p', p) < \cro_{j-1}(\natpos(c_i, p), p)$.
	Hence, due to the tie-breaker rule,
	our algorithm would have selected $\natpos(c_i, p)$ instead of~$p'$.
\end{proof}

Now we have gathered everything to establish the key lemma of this section.

\begin{lemma}
	\label{cl:t2planar}
	The drawing of $T_2$ is embedded according to~$\mathcal{E}_2$.
\end{lemma}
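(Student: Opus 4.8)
The plan is to reduce the lemma to a single monotonicity invariant: \emph{on every layer $j \in \{2,\dots,L(r_2)\}$, the positions that the dynamic program assigns to the vertices of $T_2$ are weakly increasing along the order $<_j^2$ induced by $\mathcal{E}_2$.} Once this invariant is in hand, the lemma follows quickly. Indeed, ordering $V_j(T_2)$ first by position and then breaking ties in the order prescribed by $\mathcal{E}_2$ (which is exactly what the algorithm does) then produces precisely the order $<_j^2$ on every layer $j\ge 2$; on layer~$1$ the leaf order is $\mathcal{E}_2$-consistent by assumption. Since $\mathcal{E}_2$ is a planar embedding of $T_2$, \emph{some} straight-line layered realization whose per-layer vertex orders are the $<_j^2$ is crossing-free, and whether two straight-line edges of a layered drawing cross depends only on the relative order of their endpoints on the two layers; hence the realization produced by the algorithm (after perturbing equal positions consistently with $\mathcal{E}_2$ so that all coordinates on a layer are distinct) is also crossing-free, and the unique planar embedding it realizes is $\mathcal{E}_2$.

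To prepare the proof of the invariant, I first rewrite the position chosen for a non-root vertex in a convenient closed form. By \cref{cl:dp-positions}, a child $c$ of a vertex $v$ placed at position $p$ is placed at $\natpos(c,p)$, and by \cref{cl:minvaluessequential} the set $\optpos(c)$ is an interval; therefore $\natpos(c,p) = \max\!\big(\min\optpos(c),\,\min(p^\star,\max\optpos(c))\big)$, i.e.\ $\natpos(c,p)$ is the ideal position $p^\star$ clamped to the interval $\optpos(c)$. As such, $\natpos(c,p)$ is monotonically non-decreasing in each of the three quantities $p^\star$, $\min\optpos(c)$, and $\max\optpos(c)$.

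I then prove the invariant by downward induction on the layer $j$, from $j = L(r_2)$ down to $j = 2$. The base case $j = L(r_2)$ is trivial, since $r_2$ is the only vertex of $T_2$ on that layer. For the inductive step, assume the invariant holds on layer~$j$, and take two vertices $c <_{j-1}^2 c'$ of $T_2$ on layer~$j-1$ with parents $v = \mathrm{par}(c)$ and $v' = \mathrm{par}(c')$ on layer~$j$. In the realization of $\mathcal{E}_2$, the layer-$(j{-}1)$ order is $<_{j-1}^2$ and the layer-$j$ order is $<_j^2$; were $v' <_j^2 v$, the edges $(c,v)$ and $(c',v')$ would be inverted between the two layers and hence cross, contradicting planarity of $\mathcal{E}_2$, so $v \le_j^2 v'$ (with $v = v'$ allowed when $c,c'$ are siblings). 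By the induction hypothesis the position $p_v$ chosen for $v$ is at most the position $p_{v'}$ chosen for $v'$, whence $p_v^\star \le p_{v'}^\star$ because ideal positions preserve order. Moreover $\min\optpos(c) \le \min\optpos(c')$ and $\max\optpos(c) \le \max\optpos(c')$ by \cref{cl:minvaluessequential}. Substituting these three inequalities into the clamp formula and using its monotonicity yields $\natpos(c,p_v) \le \natpos(c',p_{v'})$, i.e.\ $c$ is placed weakly to the left of $c'$; this is the invariant on layer~$j-1$.

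The step I would take the most care with is the transfer of crossing-freeness in the first paragraph: the argument must \emph{not} be circular. We never assume that the drawing produced by the algorithm is planar; we only invoke the hypothesis that $\mathcal{E}_2$ is a planar embedding as the existence of \emph{one} crossing-free layered realization inducing the per-layer orders $<_j^2$, and then move crossing-freeness to any realization sharing those per-layer orders with pairwise distinct $x$-coordinates. Everything else is a routine propagation of the monotonicity facts already established in \cref{cl:dp-positions,cl:minvaluessequential} together with the order preservation of ideal positions, so I expect no further obstacles.
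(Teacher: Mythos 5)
Your proof is correct and follows essentially the same route as the paper's: a downward induction from the root layer establishing that the positions chosen by the dynamic program are weakly increasing along $<_j^2$, driven by \cref{cl:dp-positions}, \cref{cl:minvaluessequential}, and the order preservation of ideal positions. Your packaging is slightly cleaner~-- the explicit clamp formula for $\natpos$ handles siblings and non-siblings uniformly where the paper argues the two cases via \cref{cl:ovaluesmin} and a direct no-crossing argument, and you recover crossing-freeness at the end from order consistency alone~-- but the underlying argument is the same.
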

\begin{proof}
We prove that the vertices of $T_2$
are ordered according to~$<^2_j$
by induction on layer~$j$, starting with the layer of the root $r$ of $T_2$.
On layer $j = L(r_2)$, there is only $r$,
which, of course, cannot contradict $<^2_j$.

Let $j < L(r_2)$ and $v_1, v_2, \dots v_{n_{2|j+1}}$ be the vertices on layer~$j+1$.
Then, by \cref{cl:ovaluesmin,cl:dp-positions},
the children $C_{v_i}$ on layer~$j$ have increasing positions
respecting~$<^2_j$ for every $i \in \{1, \dots, n_{2|j+1}\}$
and the edges to these children do not cross.
It remains to show that no pair of edges between layers~$j$ and $j+1$ without a common endpoint cross.
By our induction hypothesis,
for two vertices $v_i,v_{i'}$ on layer~$j+1$ with $i < i'$,
the position $p_i$ of $v_i$ is not greater
than the position $p_{i'}$ of $v_{i'}$.
By \cref{cl:distancetoidealgap}, it follows for the ideal positions of $p_i$ and $p_{i'}$ that $p^\star_i \le p^\star_{i'}$.
By \cref{cl:minvaluessequential}, the $\min$ and $\max$ values of $\optpos$ of the vertices on layer~$j$ are monotonically increasing.
Hence, by \cref{cl:dp-positions} and the definition
of $\natpos$, there are no crossings between
edges with target $v_i$ and edges with target $v_{i'}$,
as this would contradict $p^\star_i \le p^\star_{i'}$.
Hence, the edges between layer~$j$ and $j+1$ are planar, which concludes the induction step.
\end{proof}
After we have now shown that the dynamic program
yields a valid solution, i.e.,
a drawing where both trees are internally crossing-free,
it remains to prove that the number of crossings
between $T_1$ and~$T_2$ is minimum.
\begin{lemma}
	\label{cl:mincrossings}
	The number of crossings in the computed drawing is minimum.
\end{lemma}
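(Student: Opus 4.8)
The plan is to compare the drawing~$D$ produced by the dynamic program with an arbitrary \emph{admissible} drawing, by which I mean a straight-line drawing of~$\mathcal{F}$ in which $T_1$ is drawn according to~$\mathcal{E}_1$, $T_2$ according to~$\mathcal{E}_2$, and the leaves appear in the prescribed order~$<_1$ on layer~$1$. The theorem asks for a crossing-minimum drawing within this class, so it suffices to show that $D$ has exactly $o^\star$ crossings and that every admissible drawing has at least $o^\star$ crossings.

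First I would record the shape of the crossing number of an admissible drawing. Since $T_1$ is planar by construction and $T_2$ is planar by~\cref{cl:t2planar}, every crossing in an admissible drawing is between an edge of~$T_1$ and an edge of~$T_2$; and since all edges span exactly one layer, two edges can only cross within the horizontal band between the same pair of consecutive layers. Hence an edge $(c,v)$ of~$T_2$ with $L(c)=j-1$ crosses only $T_1$-edges between layers~$j-1$ and~$j$, and by definition of the crossing functions the number of these crossings is $\cro_{j-1}\big(p(c),p(v)\big)$, where $p(u)$ denotes the position (gap index of~$\mathcal{E}_1$) of a vertex~$u$ of~$T_2$ on its layer; in particular the order in which two vertices of~$T_2$ sharing a position are drawn is irrelevant. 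Thus an admissible drawing is, for the purpose of counting crossings, encoded by a position assignment~$p$ that is weakly monotone along~$\mathcal{E}_2$ on every layer and places every leaf at its prescribed position, and its number of crossings equals $\sum_{(c,v)\in T_2}\cro_{L(c)}\big(p(c),p(v)\big)$.

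Next, the lower bound. I would prove by bottom-up induction on the subtree of~$T_2$ rooted at~$v$ that, for every admissible position assignment~$p$, the number of crossings contributed by the edges of that subtree is at least $o[v,p(v)]$. For $v\in V_2(T_2)$ this holds with equality because the leaf positions are fixed and $o[v,p]=\sum_i\cro_1(p_{c_i},p)$. For $v$ on a layer~$j>2$ with children $c_1,\dots,c_{|C_v|}$, the crossings of its subtree split into the crossings of the edges $(c_i,v)$, contributing $\sum_i\cro_{j-1}\big(p(c_i),p(v)\big)$, plus the crossings inside the subtree of each~$c_i$, which by the induction hypothesis total at least $\sum_i o[c_i,p(c_i)]$; bounding each term from below by $\min_{q}\big(o[c_i,q]+\cro_{j-1}(q,p(v))\big)$ gives exactly $o[v,p(v)]$. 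Applying this at $v=r_2$ and minimizing over its position shows that every admissible drawing has at least $o^\star$ crossings.

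Finally, the upper bound. By~\cref{cl:t2planar} the drawing~$D$ returned by the algorithm is admissible, so its crossing number equals the $\cro$-sum associated with the position assignment~$p_D$ it realizes. Combining~\cref{cl:ovaluesmin}, which states that $o[v,p]=\sum_i\big(o[c_i,\natpos(c_i,p)]+\cro_{j-1}(\natpos(c_i,p),p)\big)$, with~\cref{cl:dp-positions}, which says the dynamic program assigns the children of~$v$ to exactly the natural positions $\natpos(c_i,p)$, a bottom-up induction shows that the $\cro$-sum over the edges of the subtree rooted at~$v$ equals $o[v,p_D(v)]$ for every~$v$. Since the algorithm places $r_2$ at a position minimizing $o[r_2,\cdot]$, the crossing number of~$D$ equals $o^\star$, and together with the lower bound this proves the lemma. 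The genuinely delicate point — already resolved by the earlier claims rather than here — is that the recursion defining~$o$ optimizes the position of each child of~$v$ independently and thus ignores the requirement that \emph{all} vertices of layer~$j-1$ be globally ordered along~$\mathcal{E}_2$; \cref{cl:minvaluessequential,cl:ovaluesmin,cl:dp-positions} guarantee that the positions chosen by this relaxed optimization are nonetheless globally consistent, so the relaxation is tight and the present proof reduces to the bookkeeping above.
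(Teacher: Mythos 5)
Your proof is correct and follows essentially the same route as the paper's: a bottom-up induction over the layers showing that $o[v,p]$ equals the minimum number of crossings contributed by the subtree rooted at~$v$ when placed at position~$p$, with achievability supplied by \cref{cl:ovaluesmin,cl:dp-positions} and admissibility of the output by \cref{cl:t2planar}. The only difference is presentational: the paper states both directions as a single induction claim, while you split it into an explicit lower bound over all admissible position assignments and an exact accounting of the computed drawing.
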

\begin{proof}
	We show by induction over the layers $j = 2, 3, \dots$
	that for a vertex $v$ and a position $p$,
	$o[v, p]$ is the minimum number of crossings
	induced by ($T_1$ and) the subtree of $T_2$ rooted at~$v$,
	which we call~$T_v$,
	across all drawings of $T_v$ when we place~$v$ at position~$p$.
	For $j = 2$, this is clear as we just sum up the
	number of crossings induced by the edges to the leaves.
	
	Let $j \ge 3$.
	By \cref{cl:dp-positions},
	we know that the dynamic program has selected the positions
	$\natpos(c_1, p) \le \ldots \le \natpos(c_{|C_v|}, p)$
	for the children $c_1, \dots, c_{|C_v|}$ of~$v$.
	By our induction hypothesis, we know that, for each $c \in C_v$,
	$o[c, \natpos(c, p)]$ corresponds to a drawing
	of $T_c$ at position $\natpos(c, p)$ with the minimum number of crossings.
	We add up the number of crossings between layer~$j-1$ and $j$,
	and by the formulation of the dynamic program, we know
	that this is again minimum across all positions of~$c$.
\end{proof}

\paragraph{Running Time.}
It remains to analyze the running time
of our dynamic program.

\begin{lemma}
	\label{cl:runtime}
	The running time of our algorithm is in $\oh(n_1^2 \cdot n_2) \subseteq \oh(n^3)$.
\end{lemma}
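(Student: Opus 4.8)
The plan is to establish the bound by a direct accounting of the nested loops of the dynamic program, after first arguing that a single term of the recurrence can be evaluated in constant time. So I would proceed in three steps: (i) a preprocessing step that makes every call to a crossing function constant-time, (ii) a per-vertex cost analysis, and (iii) a summation over all vertices of $T_2$.

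\emph{Preprocessing.} For a vertex of $T_2$ on layer~$j$ there are exactly $n_{1|j}+1$ candidate positions, namely $\{0,\dots,n_{1|j}\}$. To evaluate each term of the recurrence in constant time, I would precompute, for every position~$p$ on every layer $j\in\{2,\dots,L(r_1)\}$, its ideal position~$p^\star$ on layer~$j-1$. Since the fixed embedding~$\mathcal{E}_1$ makes the children of each vertex of $T_1$ on layer~$j$ occupy a consecutive block of $V_{j-1}(T_1)$, a single left-to-right sweep of each layer records all block boundaries, hence all ideal positions, in $\oh(n_1)$ time overall. By \cref{cl:distancetoidealgap} we then have $\cro_{j-1}(q,p)=|q-p^\star|$, so every call to~$\cro$ costs $\oh(1)$; and every access to an already computed value $o[c,q]$ is an $\oh(1)$ table lookup.

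\emph{Cost per vertex.} Fix a vertex $v\in V_j(T_2)$ with child set $C_v$; note that $\sum_{v}|C_v|$ equals the number of edges of the tree~$T_2$, i.e.\ $n_2-1$, since the edge into~$v$ accounts for the membership of each child. For $j=2$ the base case evaluates $o[v,p]=\sum_i\cro_1(p_{c_i},p)$ in $\oh(|C_v|)$ time per position, hence $\oh\big((n_{1|2}+1)\cdot|C_v|\big)$ over all positions of~$v$. For $j\ge 3$, computing $o[v,p]$ for a single~$p$ means, for each child~$c_i$, scanning all $q\in\{0,\dots,n_{1|j-1}\}$ and keeping the minimum of $o[c_i,q]+\cro_{j-1}(q,p)$ (recording the argmin, and, on ties, the one maximizing $\cro_{j-1}(q,p)$, which does not affect the asymptotics), at cost $\oh\big(|C_v|\cdot(n_{1|j-1}+1)\big)$; over the $n_{1|j}+1$ positions of~$v$ this is $\oh\big((n_{1|j}+1)\cdot|C_v|\cdot(n_{1|j-1}+1)\big)$. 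The subsequent scan of these entries to determine $\optpos(v)$ is subsumed.

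\emph{Summation.} Bounding $n_{1|j}+1$ and $n_{1|j-1}+1$ by $\oh(n_1)$, the cost for~$v$ is $\oh\big(n_1^2\cdot|C_v|\big)$, so the total over all (non-leaf) vertices of~$T_2$ is $\oh\big(n_1^2\sum_v|C_v|\big)=\oh\big(n_1^2\cdot(n_2-1)\big)=\oh(n_1^2\cdot n_2)$. The $\oh(n_1)$ preprocessing, the final minimization over the $\oh(n_1)$ positions of~$r_2$, and the $\oh(n_2)$-time backtracking that outputs the drawing are all dominated (layers above $L(r_2)$ need not be processed, as $T_2$ has no vertices there and $T_1$ is fixed). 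Finally, $n_1,n_2\le n_1+n_2=n$ gives $\oh(n_1^2\cdot n_2)\subseteq\oh(n^3)$. I expect the analysis to be essentially routine loop-counting; the one place that needs a small argument is the $\oh(n_1)$ precomputation of all ideal positions together with the resulting constant-time evaluation of~$\cro$ via \cref{cl:distancetoidealgap}.
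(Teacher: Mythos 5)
Your proof is correct and follows essentially the same accounting as the paper: $\oh(n_1)$ candidate positions for $v$ times $\oh(n_1)$ candidate positions per child, summed over $\sum_v |C_v| = \oh(n_2)$ children of $T_2$, with root placement and backtracking dominated. The only (harmless) difference is the preprocessing: you get $\oh(1)$-time evaluation of $\cro$ via an $\oh(n_1)$ sweep for ideal positions and \cref{cl:distancetoidealgap}, whereas the paper simply tabulates all values $\cro_j(q,p)$ in $\oh(n_1^2)$ time.
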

\begin{proof}
	For a vertex~$v$ of~$V_j(T_2)$ and a position $p \in \{0, \dots, n_{1|j}\}$,
	we can compute $o[v, p]$ by finding,
	for each child $c \in C_v$ and
	a position $q$ in a subset of $\{0, \dots, n_{1|j-1}\}$,
	the minimum of $o[c,q] + \cro_{j-1}(q,p)$.
	
	The number of children over all steps is in~$\oh(n_2)$ as~$T_2$ is a tree
	and the number of positions is in~$\oh(n_1)$.
	We can pre-compute and store all values~$\cro_j(q,p)$
	in $\oh(n_1^2)$ time.
	We have $\oh(n_1 n_2)$ entries of $o[v, p]$,
	which we can compute in overall $\oh(n_1^2 n_2) \subseteq \oh(n^3)$ time.
	The optimal root placement can be found in linear time.
	For the backtracking when constructing the final drawing,
	we simply store for each entry $o[v, p]$ a pointer to the entries it is based on.
\end{proof}

\section{Multiple Trees on Three Layers}\label{sec:three-layers}
\newcounter{secThreeLayers}
\setcounter{secThreeLayers}{\value{section}}

In this section, we consider the case that we are given
a forest~$\mathcal{F} = \{T_1, \dots, T_k\}$
of $k$ trees spanning (at most) three layers each,
and we show the following result.

\begin{theorem}
	\label{thm:three-layers}
	Let $\mathcal{F}$ be an $n$-vertex layered forest of $k$ rooted trees
	on three layers,
	where all leaves are assigned to layer~$1$ and have a fixed order,
	which prescribes a planar embedding of each tree individually.
	We can compute a drawing of~$\mathcal{F}$
	where each tree is drawn in the prescribed planar embedding
	with the minimum number of crossings in $\oh(n^k)$~time.
\end{theorem}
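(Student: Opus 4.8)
The plan is to reduce the crossing-minimization problem to a shortest-path problem on a $k$-dimensional grid graph. I begin by fixing, for each tree $T_i$, its planar embedding $\mathcal{E}_i$, so that the internal ordering $<_j^i$ of $V_j(T_i)$ on each layer is determined; the only remaining freedom is how the (at most three) layers of the different trees interleave. The key structural observation is that with only three layers, every non-leaf vertex of a tree is either on layer~$2$ or on layer~$3$, and a layer-$3$ vertex is a root whose children lie on layer~$2$. So a complete drawing is specified by a total order on $V_2(\mathcal{F})$ extending $\prec_2$ together with a total order on $V_3(\mathcal{F})$ extending $\prec_3$; moreover, once the layer-$2$ order is fixed, \cref{cl:dp-positions}-style reasoning (reused from \cref{sec:two-trees}, applied to each root individually against the rest of the forest) tells us how to optimally place each layer-$3$ root, so the real decision variable is the interleaving on layer~$2$.

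Next I would encode the interleaving as a monotone lattice path. Process the leaves on layer~$1$ from left to right in the fixed order $<_1$; scanning $<_1$ induces, for each tree $T_i$, a monotone ``progress pointer'' $a_i \in \{0,1,\dots,n_{i|2}\}$ counting how many layer-$2$ vertices of $T_i$ have had all their children placed, i.e.\ must appear to the left of the current scan position (using the ideal-position correspondence of \cref{cl:distancetoidealgap}). A state of the $k$-dimensional grid is a tuple $(a_1,\dots,a_k)$, and a step increments exactly one coordinate $a_i$, corresponding to committing the next layer-$2$ vertex of $T_i$ to come next in $<_2$. A monotone path from $(0,\dots,0)$ to $(n_{1|2},\dots,n_{k|2})$ thus corresponds bijectively to a valid total order on layer~$2$ (respecting every $<_j^i$). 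I then weight each edge of the grid by the incremental cost it incurs: when coordinate $a_i$ is incremented, placing the corresponding layer-$2$ vertex $v$ of $T_i$ at the current position costs (a) the crossings of the edges from $v$ down to its already-fixed leaf children, against the leaves of the other trees lying in the swept interval — computable from the current state — plus (b) the crossings of the up-edge from $v$ to its root $r_i$, against the competing up-edges of other trees, which by the monotonicity argument of \cref{cl:minvaluessequential,cl:ovaluesmin} depends only on the relative order of $v$ and the already-placed layer-$2$ vertices, hence only on the state. A shortest path in this DAG then yields the minimum total number of crossings, and backtracking along the path reconstructs the drawing; the final layer-$3$ root order is read off from the natural positions of \cref{cl:ovaluesmin}.

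For the running time: the grid has $\prod_{i=1}^k (n_{i|2}+1) \in \oh(n^k)$ vertices and $\oh(k \cdot n^k)$ edges, each edge weight computable in $\oh(\mathrm{poly}(n))$ time (and with care, amortized to near-constant), so Dijkstra or a DAG shortest-path sweep runs in $\oh(n^k)$ time up to lower-order factors, which I would absorb into the bound as stated. The main obstacle I anticipate is step~(b): arguing cleanly that the cost of an incremental layer-$2$ commitment — in particular the crossings among up-edges to the various roots, and crossings between up-edges of one tree and down-edges of another — is genuinely a function of the grid state alone and does not depend on the history of how that state was reached. This requires a careful case analysis showing that the optimal placement of each root $r_i$ relative to its children is forced once the layer-$2$ order restricted to $T_i$'s neighbourhood is known, which is exactly where the ideal-position and $\optpos$-interval machinery of \cref{cl:distancetoidealgap,cl:minvaluessequential,cl:ovaluesmin} carries over, together with an argument (extending \cref{cl:t2planar}) that this placement keeps each $T_i$ internally planar. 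I would also need to verify the edge case where a tree has height less than three, so its ``root'' already sits on layer~$2$ and contributes no up-edge.
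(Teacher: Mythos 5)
Your overall strategy --- encoding the interleaving of the layer-$2$ vertices as a monotone path in a $k$-dimensional grid graph with per-step crossing costs --- is exactly the paper's reduction. However, there is a genuine gap precisely at the point you flag yourself, namely step~(b), and the fix you sketch does not work. The crossings between the up-edge $(v,r_i)$ of a layer-$2$ vertex $v$ of $T_i$ and the up-edges $(w,r_j)$ of other trees depend on the relative order of the roots $r_i$ and $r_j$ on layer~$3$. This is a global decision: it affects \emph{all} up-edges of the two trees simultaneously, it interacts with the choice of the layer-$2$ order, and it is not encoded in the grid state $(a_1,\dots,a_k)$. Consequently, your edge weights are simply not well-defined functions of the state, and the shortest-path decomposition breaks. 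Your proposed remedy --- placing each root ``individually against the rest of the forest'' via the natural-position machinery of \cref{cl:distancetoidealgap,cl:minvaluessequential,cl:ovaluesmin} and reading the root order off afterwards --- does not repair this: those claims are proved in the two-tree setting where the other tree's drawing (including its root) is already fixed, whereas here the roots of \emph{several} undetermined trees must be ordered against each other, and optimizing each root separately ignores the root--root interactions (with the layer-$2$ order fixed, ordering the roots is a one-sided crossing-minimization instance on $k$ free vertices, not a collection of independent single-vertex placements). The paper resolves this by exploiting that layer~$3$ carries at most $k$ roots: it enumerates all at most $k!$ permutations of the roots \emph{first}, and for each fixed root order builds a separate grid graph, so that the crossing functions $\cro^v_i(p)$ (and hence the edge weights) are determined by the state alone.

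A secondary issue is the running-time accounting. With the root order fixed, the paper precomputes all values $\cro^v_i(p)$ in $\oh(n^2)$ total time (\cref{lem:compute-civp}), so each edge weight is a sum of $k-1$ table lookups, and it bounds the grid size by $\prod_i (n_{i|2}+1) \le (n/k)^k$, which is what absorbs the $k!\,k^2$ overhead into $\oh(n^k)$. Your plan to compute each edge weight in $\oh(\mathrm{poly}(n))$ time and ``absorb lower-order factors'' is not sound as stated: a $\mathrm{poly}(n)$ factor on top of $\Theta(n^k)$ edges is not lower order, so you need the explicit precomputation and the $(n/k)^k$ bound (or an equivalent argument) to land within $\oh(n^k)$, and you must also account for the $k!$ root permutations once you add that missing enumeration step.
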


The first property we use to prove \cref{thm:three-layers}
is that the order of roots on layer~$3$ is
fixed, similar to the order of the leaves on layer~$1$.
We can assume this because there are only up to $k$ roots on layer~$3$, with at most $k!$ ways to arrange them. 
We simply consider each permutation of the roots on layer~$3$ individually,
and henceforth assume that both total orders $<_1$ and $<_3$ are given, fixing the roots and leaves, and the only remaining task is to compute $<_2$ of the vertices on layer~$2$ while maintaining their partial order~$\prec_2$.
Note that if any tree has its root on layer~$2$,
we treat this root like the other vertices of layer~$2$.

As in \cref{sec:two-trees}, we use the notion of
positions and crossing functions, however,
we slightly adjust their definitions to better suit
the setting of this section.
Let~$\sigma$ be a permutation of~$V_2(\mathcal{F})$
indexed by $1, 2, \dots$ and respecting the partial order $\prec_2$.
For $i \in \{1, \dots, k\}$ and some vertex $v \in V_2(\mathcal{F}) \setminus V_2(T_i)$,
we denote the \emph{position} (starting at~0) of~$v$ within the subsequence of~$\sigma$
consisting of the vertices~$V_2(T_i) \cup \{v\}$ by~$p_i^v$.%
\footnote{This is a generalization of the positions
	introduced in \cref{sec:two-trees} where all
	positions were relative to (the given embedding of) $T_1$.}
Note that, in a drawing using~$\sigma$ as~$<_2$, we can charge every crossing to precisely two vertices of~$V_2(\mathcal{F})$ as
any crossing occurs between two edges that have two distinct endpoints on layer~$2$.
Now observe that for a vertex~$v \in V_2(T_j)$, where $j \in \{1, \dots, k\}$,
the number of crossings charged to~$v$ with respect to~$\sigma$
depends only on $p_i^v$ for each $i \in \{1, \dots, k\} \setminus \{j\}$.
Therefore, we introduce the \emph{crossing function}~$\cro^v_i(p)$
returning the resulting number of crossings
when we insert $v$ at a position $p \in \{0, \dots, n_{i|2}\}$
into the planar embedding of~$T_i$.
The number~$\cro_\sigma(v)$ of crossings charged to~$v$
when using permutation~$\sigma$ is then
$\cro_\sigma(v) = \sum_{i \in \{1, \dots, k\} \setminus \{j\}} \cro^v_i(p_i^v)$
and the total number~$\cro(\sigma)$ of crossings when using
permutation~$\sigma$ is then
$\cro(\sigma) = \sum_{v \in V_2(\mathcal{F})} \cro_\sigma(v) / 2$.

\begin{lemma}
	\label{lem:compute-civp}
	For all combinations of $i \in \{1, \dots, k\}$,
	$v \in V_2(\mathcal{F}) \setminus V_2(T_i)$,
	and $p \in \{0, \dots, n_{i|2}\}$,
	we can compute every value~$\cro^v_i(p)$ in a total of $\oh(n^2)$ time.
\end{lemma}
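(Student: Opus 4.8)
The plan is to count, for each tree $T_i$, the crossings between edges of $T_i$ and edges incident to a single foreign vertex $v$, as a function of where $v$ is inserted into $T_i$'s embedding. First I would fix $i$ and think of the planar embedding of $T_i$ as a fixed picture on three layers. A foreign vertex $v$ lives on layer~$2$ and has one edge going up (to its parent on layer~$3$, if $v$ is not a root) and some number of edges going down (to its children on layer~$1$). Inserting $v$ at position $p \in \{0,\dots,n_{i|2}\}$ means placing it in the $p$-th gap among the layer-$2$ vertices of $T_i$; the crossings of $v$'s up-edge with $T_i$'s edges between layers $2$ and $3$ depend only on $p$ and on where $v$'s parent sits relative to $T_i$'s roots on layer~$3$, and similarly each down-edge of $v$ crosses $T_i$'s edges between layers $1$ and $2$ in a number determined by $p$ and the (fixed) position of that leaf on layer~$1$.

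The key observation is that each of these contributions is, as a function of $p$, a piecewise-linear unimodal ``V-shaped'' function of the type already analyzed in \cref{cl:distancetoidealgap}: as $p$ moves one gap to the right, we either start crossing one more $T_i$-edge on one side or stop crossing one $T_i$-edge on the other side (counting multiplicities where several $T_i$-edges emanate from the same layer-$2$ vertex). So $\cro^v_i$ is a sum of at most $\mathrm{indeg}(v)+1$ such V-shaped functions and can be described succinctly. Concretely, I would compute $\cro^v_i(0)$ directly, and then compute the forward differences $\cro^v_i(p) - \cro^v_i(p-1)$ for $p = 1,\dots,n_{i|2}$, each of which is obtained by a constant amount of bookkeeping per $(i,v,p)$ triple once we know, for the gap between the $(p-1)$-st and $p$-th layer-$2$ vertex of $T_i$, how many of $v$'s edges have their other endpoint ``to the left'' versus ``to the right''. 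Prefix-summing the differences then recovers all the values $\cro^v_i(p)$.

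For the running-time accounting I would argue as follows. Summing over all $i$, the total number of layer-$2$ vertices of the trees $T_i$ is $O(n)$, so the total number of positions $\sum_i (n_{i|2}+1)$ is $O(n)$; and the number of foreign vertices $v$ considered for each $i$ is at most $|V_2(\mathcal F)| = O(n)$. Hence there are $O(n^2)$ triples $(i,v,p)$, and it remains to show the amortized per-triple cost is $O(1)$. This is where one has to be slightly careful: naively recomputing the left/right split of $v$'s edges at each gap is too slow, so I would precompute, for each tree $T_i$, a sorted list of the layer-$1$ and layer-$3$ vertices of $T_i$ (which is just $T_i$'s embedding, available in $O(n)$ total time), and then for a fixed $v$ sweep $p$ from $0$ to $n_{i|2}$ while maintaining counters of how many of $v$'s down-neighbors lie left of the current gap and how many of $v$'s up-edge endpoints (i.e., $v$'s single parent) lie left; each counter changes only when the sweep passes the relevant endpoint, so the whole sweep for one $v$ costs $O(n_{i|2} + \mathrm{indeg}(v) + 1)$, and summing over all $i$ and $v$ gives $O(n^2)$.

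The main obstacle I expect is purely one of careful bookkeeping rather than any conceptual difficulty: one must be precise about multiplicities (a layer-$2$ vertex of $T_i$ may have several incident edges on each side, so crossing ``past'' it changes the count by more than one), about the special case where $v$ is itself a root and has no up-edge, about the boundary gaps $p=0$ and $p=n_{i|2}$, and about the fact that the order $<_1$ on the leaves and the chosen order $<_3$ on the roots are what pin down whether a given endpoint of $v$ counts as left or right of a given gap. Once these are handled consistently with \cref{cl:distancetoidealgap}, the $O(n^2)$ bound follows from the triple-counting argument above.
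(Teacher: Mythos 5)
Your overall strategy---compute $\cro^v_i(0)$ directly, then obtain $\cro^v_i(p)$ from $\cro^v_i(p-1)$ by an update per position, and bound the total work globally---is the same incremental scheme the paper uses. The genuine gap is in the update step: you claim each forward difference $\cro^v_i(p)-\cro^v_i(p-1)$ is determined, in $\oh(1)$ amortized time, by counters recording how many of $v$'s neighbors lie ``left of the current gap'' (weighted by degrees of the passed layer-$2$ vertex). It is not. Writing $u$ for the $p$-th layer-$2$ vertex of $T_i$, passing $u$ changes the crossings of a down-edge $(c,v)$ by $\#\{c'\text{ child of }u : c <_1 c'\} - \#\{c'\text{ child of }u : c' <_1 c\}$, i.e.\ the difference depends on the pairwise interleaving on layer~$1$ of $v$'s children with $u$'s children, and a child of $v$ may lie strictly between two children of $u$. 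Concretely, let $v$ be a root on layer~$2$ with a single leaf child $c$, and let $u$ have two leaf children $c'_1 <_1 c <_1 c'_2$: then $\cro^v_i(0)=\cro^v_i(1)=1$, so the true difference is $0$, while any left/right classification of $c$ relative to the gap, multiplied by $u$'s down-degree as your ``multiplicities'' remark suggests, yields $\pm 2$. So the invariant your sweep maintains simply does not contain enough information; to stay within your claimed per-$(i,v)$ budget you would need per-leaf data, e.g.\ merging $v$'s children with the leaf sequence of $T_i$ to know, for every leaf $c'$ of $T_i$, how many children of $v$ precede it---which still fits in $\oh(n^2)$ overall, but is not what you described.

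The fix is also simpler than your $\oh(1)$-per-triple goal, which is stronger than needed. The paper's proof updates $\cro^v_i(p-1)$ to $\cro^v_i(p)$ by brute-force testing every pair consisting of an edge of the star $S_v$ and an edge of the star around the $p$-th vertex of $V_2(T_i)$; the per-step cost is $\deg(v)\cdot\deg(u)$, and the amortization is done over pairs of edges of the forest (each pair is examined $\oh(1)$ times over all $i$, $v$, $p$), which immediately gives $\oh(n^2)$ because the forest has $\oh(n)$ edges. In other words, your triple-counting argument is not the right charging scheme here: once the per-step work is charged to edge pairs rather than forced to be constant per position, the lemma follows, and the delicate sweep machinery (and its flawed left/right counter) can be dropped entirely.
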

\begin{proof}
	First save, for every $v \in V_2(\mathcal{F})$,
	the star~$S_v$ induced by $v$ and $v$'s neighbors
	in total $\oh(n)$ time.
	Now for a fixed $i \in \{1, \dots, k\}$, consider the given
	planar embedding~$\mathcal{E}_i$ of~$T_i$.
	Also fix $v \in V_2(\mathcal{F}) \setminus V_2(T_i)$ and
	compute~$\cro^v_i(0)$ by checking,
	for every pair of edges of $S_v$ and $T_i$,
	if there is a crossing if $v$ is the leftmost vertex on layer~$2$.
	Then for $p = 1, \dots, n_{i|2}$,
	update $\cro^v_i(p-1)$ to $\cro^v_i(p)$ by checking each pair
	of edges from $S_v$ and the star around the $p$-th vertex of~$V_2(T_i)$.
	Over all of these steps, all vertices, and all trees,
	every pair of edges is considered at most four times,
	which yields a running time in~$\oh(n^2)$.
\end{proof}

\begin{figure}[t]
	\centering
	\begin{minipage}[b]{0.485 \linewidth}
		{\includegraphics[page=1,scale=.945]{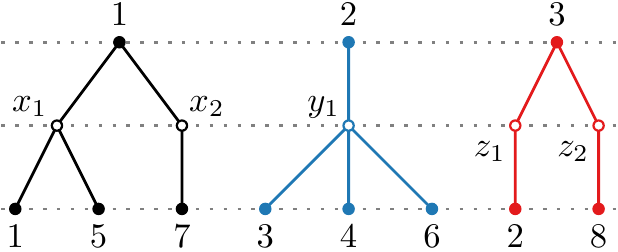} \raggedright}
		
		\smallskip
		
		{\footnotesize \begin{spacing}{1}
				\textsf{\bfseries (a)} We are given
				three embedded trees on three layers
				together with a total order of
				the leaves and the roots
				(numbers on the top and the bottom side).
			\end{spacing}}
		
		\bigskip
		
		{\includegraphics[page=2,scale=.945]{three-layers-example} \raggedright}
		
		\smallskip
		
		{\footnotesize \begin{spacing}{1}
				\textsf{\bfseries (b)} Drawing of the
				trees from (a) with the minimum number of crossings
				(six pairwise crossings) where
				the orders of the leaves and the roots
				are given.
				The total order of the vertices
				on layer~$2$, the middle layer, corresponds to the
				shortest $st$-path highlighted in orange
				in (c).
		\end{spacing}}
	\end{minipage}
	\hfill
	\begin{minipage}[b]{0.475 \linewidth}
		{\raggedleft \includegraphics[page=3,scale=.95]{three-layers-example}}
		
		\smallskip
		
		{\footnotesize \begin{spacing}{1}
				\textsf{\bfseries (c)} Grid graph~$H$
				with edge weights whose $st$-paths
				represent precisely the (allowed) total orders
				of the vertices on layer~$2$
				of the forest shown in~(a).
				The width in x-dimension is two and represents
				first choosing vertex $x_1$ and then vertex $x_2$
				of the first tree.
				Similarly, the y- and z-dimension represent the
				vertices of the second and third tree, respectively.
				The $st$-path highlighted in orange is the lightest path with weight~$12$.
			\end{spacing}}
	\end{minipage}
	
	\caption{Reducing the problem of finding a layered drawing of $k$~trees
		on three layers with the minimum number of crossings,
		where the leaves and the roots are fixed,
		to a shortest-path problem in a weighted $k$-dimensional grid graph.}
	\label{fig:three-layers-example}
\end{figure}

\paragraph{Reduction to a Shortest-Path Problem.}
We now construct a weighted directed acyclic $st$-graph~$H$
(see \cref{fig:three-layers-example}c)
whose $st$-paths represent precisely all total orders of $V_2(\mathcal{F})$
that respect the vertex orders $<_2^1, \dots, <_2^k$
given for each tree by its prescribed planar embedding
(see \cref{fig:three-layers-example}a).
Moreover, for an $st$-path~$\pi$ representing
a total order~$\sigma$ of $V_2(\mathcal{F})$,
the weight of $\pi$ is twice the number of crossings induced by~$\sigma$.
We let $H$ be the $k$-dimensional grid graph of side lengths $n_{1|2} \times \dots \times n_{k|2}$ directed from one corner to an opposite corner.
More precisely, $H$ has the node set $\{ (x_1, \dots, x_k) \mid x_1 \in \{0, \dots, n_{1|2}\}, \dots, x_k \in \{0, \dots, n_{k|2}\} \}$
and there is a directed edge from $(x_1, \dots, x_k)$
to $(y_1, \dots, y_k)$ if $x_j + 1 = y_j$ for exactly one $j \in \{0, \dots, k\}$ and $x_i = y_i$ otherwise.
Observe that, within~$H$, $(0, \dots, 0)$ is the unique source
and $(n_{1|2}, \dots, n_{k|2})$ is the unique sink,
which we denote by~$s$ and~$t$, respectively.
We let an edge~$e$ from $(x_1, \dots, x_k)$ to $(y_1, \dots, y_k)$ where $x_j + 1 = y_j$
represent (i) taking the $y_j$-th vertex of~$V_2(T_j)$,
to which we refer as~$v$ next,
(ii)~after having taken $x_i$ vertices of $V_2(T_i)$ for each $i \in \{1, \dots, k \}$.
Thus, we let the weight~$w_e$ of $e$ in $H$ be the number of crossings charged to~$v$
in this situation, that is, $w_e = \sum_{i \in \{1, \dots, k\} \setminus \{j\}} \cro^v_i(x_i)$.

Clearly, any $st$-path~$\pi$ in $H$ has (unweighted) length~$n_2$.
If we traverse $\pi$,
we can think of layer~$2$ as being empty when we start at $s$,
and then, for each edge of~$\pi$,
we take the corresponding vertex of~$V_2$ and add it to layer~$2$.
Since edge weights equal the number of crossings
the corresponding vertices would induce in this situation,
finding a lightest $st$-path in $H$ means
finding a crossing-minimal total order of layer~$2$
(see \cref{fig:three-layers-example}).
By constructing $H$ (using \cref{lem:compute-civp}
to compute the edge weights)
and searching for an $st$-path of minimum weight,
we obtain an XP-algorithm in $k$;
see \cref{thm:three-layers},
which we formally prove next.

\begin{proof}[of \cref{thm:three-layers}]
	For $\mathcal{F}$, we fix each order of layer~$3$ once
	and compute the corresponding
	$k$-dimensional grid graph~$H$.
	We first argue that the $st$-paths of~$H$
	represent precisely the possible total orders of $V_2(\mathcal{F})$
	and their weights are twice the number of crossings
	in the corresponding drawing of~$\mathcal{F}$.
	Thereafter, we argue about the running time.
	
	Since we have a $k$-dimensional grid graph,
	any $st$-path in~$H$ traverses
	$n_{1|2}$ edges in the first dimension,
	$n_{2|2}$ edges in the second dimension, etc.
	We can interleave the edges of different dimensions arbitrarily
	to obtain different paths.
	Hence, each path is equivalent to exactly one total order~$\sigma$
	extending the partial order $\prec_2$,
	which is given by $<_2^1, \dots, <_2^k$.
	The number of crossings of a drawing only depends on~$\sigma$.
	Every crossing occurs between two edges being incident
	to precisely two distinct vertices of $V_2(\mathcal{F})$.
	We \emph{charge} the crossing to these two vertices.
	Hence, we can add up the numbers of crossings charged
	to each vertex~$v$ (i.e., $\cro_\sigma(v)$) and divide the sum by two.
	These numbers of crossings charged to the vertices are
	by definition the edge weights of~$H$.
	Therefore, each minimum-weight $st$-path in~$H$
	is equivalent to a minimum-crossing drawing
	of $\mathcal{F}$ for the given order of leaves and roots.
	
	It remains to argue about the running time.
	The number of directed edges of~$H$ is upper-bounded by
	\begin{align*}
	E(H) &= \sum\limits_{j = 1}^k n_{j|2} \prod\limits_{i \in \{1, \dots, k\} \setminus \{j\}} \left( n_{i|2} + 1 \right) \\
	&\le k \prod\limits_{i = 1}^k \left( n_{i|2} + 1 \right)
	\le k \left(\frac nk\right)^k.
	\end{align*}
	To compute the weight of each such edge,
	we sum up $k - 1$ values of $\cro^v_i(p)$,
	which we have pre-computed in $\oh(n^2)$ time using \cref{lem:compute-civp}.
	Therefore, we can construct $H$ including
	the assignment of edge weights in $\oh(k^2 (n/k)^k)$ time
	and we can find a minimum-weight path in~$H$ in $\oh(k (n/k)^k)$ time
	using topological sorting.
	Recall that we construct a graph~$H$ for at most $k!$ permutations
	of the roots on layer~$3$.
	For the final minimum-crossing drawing,
	we use the permutation of $V_2(\mathcal{F})$
	and the permutation of $V_3(\mathcal{F})$
	that correspond to the lightest minimum-weight path in any~$H$.
	Hence, the total running time is in $\oh(k! k^2(n/k)^k) \subseteq \oh(k^3 \cdot (k-1) \cdot \ldots \cdot 2 \cdot 1 / k^k \cdot n^k) \subseteq \oh(n^k)$.
\end{proof}

Finally, we remark that our XP-algorithm from \cref{thm:three-layers} can be generalized in two ways.

\begin{remark}
	By definition, $\prec_2$ has only constraints between vertices
	of the same tree.
	We can extend $\prec_2$ by (arbitrarily many) constraints between
	vertices of different trees and \cref{thm:three-layers} still holds.
	This is because we can easily adjust our reduction:
	say $x$ is the $i$-th vertex on layer~$2$ of the first tree,
	$y$ is the $j$-th vertex on layer~$2$ of the second tree,
	and let the constraint $x \prec_2 y$ be given.
	Then, in~$H$, we set the weight of every edge representing~$x$
	and lying in the y-dimension at a position $\ge j$ to~$\infty$.
	Symmetrically, we set the weight of every edge representing~$y$
	and lying in the x-dimension at a position $< i$ to~$\infty$.
	This way, we prevent that a lightest path chooses an
	edge representing~$y$ before it chooses an edge representing~$x$.
	If and only if there is an $st$-path with non-infinity weight in~$H$,
	there is a valid arrangement of the vertices on layer~$2$.
\end{remark}

\begin{remark}
	Requiring trees on the three layers is a stronger restriction
	than actually needed.
	For our reduction, we only use the property that
	the vertex order on layer~$3$ is fixed,
	which we achieve by trying all permutations.
	For this approach, it suffices if layer~$3$ is sparse.
	Hence, our result also holds for $k$ planar-embedded graphs
	provided that on layer~$3$, there are $\oh(k)$ vertices.
	Moreover, for planar-embedded graphs and an arbitrary number of vertices on layer~$3$,
	\cref{thm:three-layers} holds as well
	if the total order of vertices on layer~$3$ is prescribed.
\end{remark}

\section{Conclusion and Open Problems}\label{sec:open}
In this work, we approach the problem of crossing minimization of layered rooted trees from two directions.
First, by describing a cubic-time dynamic program in \cref{thm:Two-trees}, keeping the number of trees $k$ small,
namely $k=2$, while allowing an arbitrary number of layers.
Inversely, our second result stated in \cref{thm:three-layers} is an XP-time algorithm for an arbitrary number of trees, restricted to only three layers. 
Hence, there is a gap between these two results, which has not yet been explored and naturally raises the following open problem.
Going one step further, is the case $k = 3$ trees and $\ell = 4$ layers polynomial-time solvable, and if so, for which $k$ and $\ell$ does it become hard?
Moreover, we pose the question of improving the complexity class for the case of $\ell = 3$ and $k > 2$, namely, can we solve the case of three layers in FPT-time in the number~$k$ of trees?
Alternatively this may be proved to be W[1]-hard.
Lastly, note that in our setting, we require that every tree
preserves its given planar embedding (imposed by the order of its leaves).
It is not clear, whether there exists a solution with less crossings without this restriction, although our current believe is that,
in any minimum-crossing solution, all of them are drawn planar.

\bigskip
\noindent \textbf{Acknowledgments.}
We thank the organizers of the workshop GNV 2022
in Heiligkreuztal for the fruitful atmosphere
where some of the ideas of this paper arose.
We also thank the anonymous reviewers for
their helpful feedback.

\bibliography{layered-trees}

\end{document}